\newcommand{\yeslabel}{{\color{ForestGreen} Yes}}
\newcommand{\nolabel}{{\color{RubineRed} No}}
\newcommand\resultstable{Table~1}
\newcommand\sirefproofmainprop{Appendix~A}
\newcommand\sirefproofcomp{Appendix~B}
\newcommand\sirefproposals{Appendix~C}
\newcommand\sirefcompaspects{Appendix~D}
\g@addto@macro{\UrlSpecials}{%
  \endlinechar=13 \catcode\endlinechar=12
  \do\%{\Url@percent}\do\^^M{\break}}
 \gdef\Url@percent{\@ifnextchar^^M{\@gobble}{\mathbin{\mathchar`\%}}}%
\newcounter{xxx}
\DeclareMathOperator{\Unifd}{Unif}
\DeclareMathOperator{\Expd}{Exp}
\DeclareMathOperator{\Poid}{Poi}
\newcommand\phylospace{{\mathcal T}}
\newcommand\alignspace{{\mathcal M}}
\newcommand\bracearraycond[1]{\left\{ \begin{array}{ll} #1 \end{array} \right.}
\newcommand\gap{\varepsilon} 
\newcommand\observationset{{\mathcal Y}}
\def\P{{\mathbb P}}        
\def\E{{\mathbb E}}        
\def\1{{\mathbf 1}}        
\def\deq{\stackrel{\scriptscriptstyle d}{=}}         
\DeclareMathOperator{\parent}{pa}
\DeclareMathOperator*{\argmin}{argmin}
\newcommand{\ud}{\,\mathrm{d}}
\def\vertexset{{\mathscr V}}
\def\edgeset{{\mathscr E}}
\def\leavesset{{\mathscr L}}
\DeclareMathOperator*{\Poi}{Poi}
\DeclareMathOperator*{\Perm}{Perm}
\DeclareMathOperator*{\child}{child}
\def\mutkf{\mu_{\textrm{TKF}}}
\def\lambdatkf{\lambda_{\textrm{TKF}}}
\def\Sigmagap{\Sigma_{\varepsilon}}
\def\ppsample{{\mathbf X}}
\newtheorem{theorem}{Theorem} 
\newtheorem{lemma}[theorem]{Lemma} 
\newtheorem{proposition}[theorem]{Proposition}
\begin{document} 

\date{} 

\title{Evolutionary Inference via the Poisson Indel Process\footnote{Accepted for the Proceedings of the National Academy of Sciences.}}

\maketitle 

\vspace{-0.5in} 
\begin{center} 
\noindent {\normalsize \sc Alexandre Bouchard-C\^ot\'e$^1$, 
and Michael I. Jordan$^2$}\\ 
\noindent {\small \it 
$^1$Department of Statistics, University of British Columbia, V6T 1Z2, Canada;\\ 
$^2$Department of Statistics and Computer Science Division, 
University of California, Berkeley, 94720, USA}\\ 
\end{center} 

\bigskip 

\begin{abstract}
We address the problem of the joint statistical inference of phylogenetic trees 
and multiple sequence alignments from unaligned molecular sequences.  This problem 
is generally formulated in terms of string-valued evolutionary processes along 
the branches of a phylogenetic tree.  The classical evolutionary process, the 
TKF91 model~\cite{Thorne1991a}, is a continuous-time Markov chain model 
comprised of insertion, deletion and substitution events.  Unfortunately 
this model gives rise to an intractable computational problem---the 
computation of the marginal likelihood under the TKF91 model is exponential 
in the number of taxa~\cite{Miklos2003}.  In this work, we present a 
new stochastic process, the Poisson Indel Process (PIP), in which the 
complexity of this computation is reduced to linear.  The new model is 
closely related to the TKF91 model, differing only in its treatment of 
insertions, but the new model has a global characterization as a Poisson 
process on the phylogeny.  Standard results for Poisson processes allow 
key computations to be decoupled, which yields the favorable computational
profile of inference under the PIP model.  We present illustrative experiments
in which Bayesian inference under the PIP model is compared to separate
inference of phylogenies and alignments.
\end{abstract}

\medskip

\section{Introduction} 

The 
field of phylogenetic inference is being transformed by the
rapid growth in availability of molecular sequence data.  There is an urgent
need for inferential procedures that can cope with data from large numbers
of taxa and that can provide inferences for ancestral states and evolutionary 
parameters over increasingly large timespans.  Existing procedures are often 
not scalable along these dimensions and can be a bottleneck in analyses of 
modern molecular datasets.

A key issue that renders phylogenetic inference difficult is that sequence data 
are generally not aligned a priori, having undergone evolutionary processes that 
involve insertions and deletions.  Consider Figure~\ref{fig:evolutionary-tree},
which depicts an evolutionary tree in which each node is associated with
a string of nucleotides, and where the string evolves via insertion, deletion
and substitution processes along each branch of the tree.  Even if we 
consider evolutionary models that are stochastically independent along the 
branches of the tree (conditioning on ancestral states), the inferential problem 
of inferring evolutionary paths (conditioning on observed data at the leaves 
of the tree) does not generally decouple into independent computations
along the branches of the tree.  Rather, alignment decisions made throughout
the tree can influence the posterior distribution on alignments along any branch.

\begin{figure}[t] %
\begin{center}
\includegraphics[width=3in]{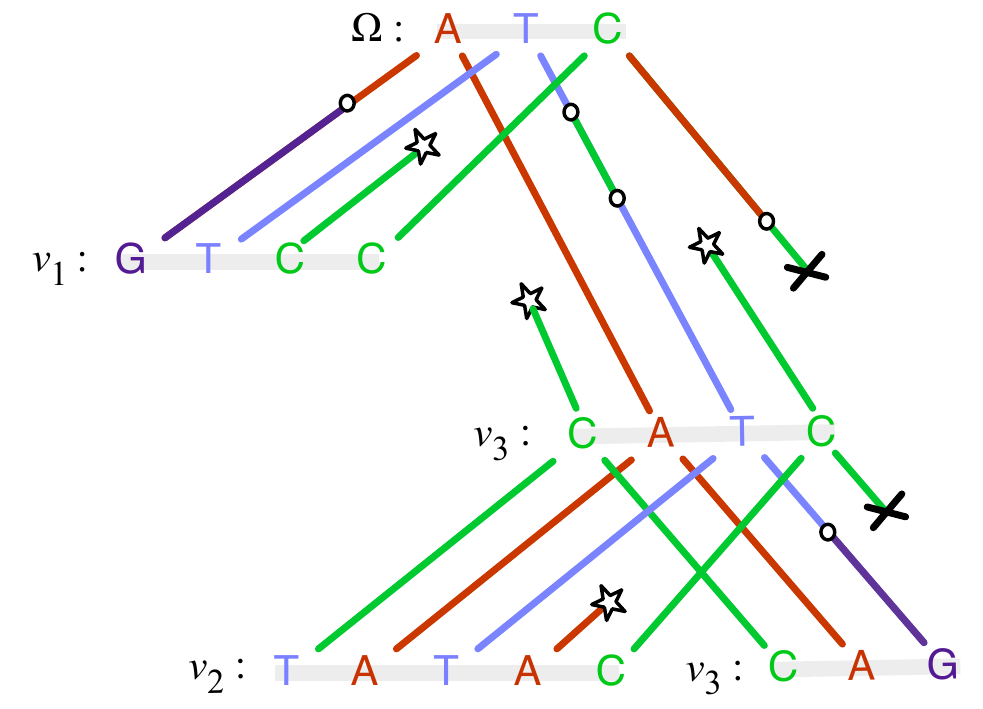} 
  \caption{A depiction of the evolution of a set of strings of nucleotides
     along the branches of a tree with leaves $\leavesset = \{v_1, v_2, v_3\}$ and root $\Omega$, where each string is subject to insertion,
     deletion and substitution processes.  Stars denote nucleotide insertion events, crosses, deletion events, and circles, substitution events.}
     \label{fig:evolutionary-tree}
\end{center}
\end{figure} 

This issue has come to the fore in a line of research beginning in 1991 with 
a seminal paper by Thorne, Kishino and Felsenstein~\cite{Thorne1991a}.
In the ``TKF91 model,'' a simple continuous-time Markov chain (CTMC) provides
a string-valued stochastic process along each branch of an evolutionary 
tree.  This makes it possible to define joint probabilities on trees and 
alignments, and thereby obtain likelihoods and posterior distributions for 
statistical inference.  A further important development has been the realization 
that the TKF91 model can be represented as a hidden Markov model (HMM), and 
that generalizations to a broader class of string-valued stochastic processes 
with finite-dimensional marginals are therefore possible~\cite{
Holmes2001,Hein2001TKF,Steel2001TKF,Metzler2001,Hein2003RecursionsPNAS,miklos2004,Lunter2005,Novak2008StatAlign}.
This has the appeal that statistical inference under these processes (known 
as transducers) can be based on dynamic programming~\cite{Allison1992HMM,
Krogh1994HMM,Searls1995,Mohri2009}.  Unfortunately, however, despite some 
analytic simplification that is feasible in restricted cases~\cite{Song2006Recurs}, 
the memory needed to represent the state space in these models is generally
exponential in the number of leaves in the tree~\cite{Dreyer2008}.  
Moreover, even in the simple TKF91 model, there does not appear to be 
additional structure in the state space that allows for simplification 
of the dynamic program.  Indeed, the running time of the most sophisticated 
algorithm for computing marginals \cite{Miklos2003} depends on the number of 
homology linearizations, which is exponential in sparse alignments~\cite{Schwartz2006}.

As a consequence of this unfavorable computational complexity, there has 
been extensive work on approximations, specifically on approximations to 
the joint marginal probability of a tree and an alignment, obtained by integrating 
over the derivation~\cite{Lunter2005,Westesson2012IndelRecon}.  A difficulty, however, is 
that these marginal probabilities play a role in tree inference procedures as 
the numerators and denominators of acceptance probabilities for Markov chain
Monte Carlo algorithms.  Loss of accuracy in these values can have large,
uncontrolled effects on the overall inference.  A second approach is to
consider joint models that are not  obtained by marginalization of a joint 
continuous-time string-valued process. A range of combinatorial \cite{Sankoff1975,Wheeler1994Malign,Lancia1999,Varon2010POY,Snir2011ParsimoniousIndelHistories,Loytynoja2012PRANK} and probabilistic \cite{Hein1990Unified,Knudsen2003LongApprox,Rivas2005,Redelings2005,Redelings2007Baliphy2} models fall in this category.
Although often inspired by continuous-time processes, obtaining a coherent and calibrated estimate of uncertainty in these models is difficult.

A third possible response to the computational complexity of joint inference 
of trees and alignments is to retreat to methods that treat these problems
separately.  In particular, as is often done in practice, one can obtain
a Multiple Sequence Alignment (MSA) via any method (often based on a 
heuristically chosen ``guide tree''), and then infer a tree based on
the fixed alignment.  This latter inferential process is generally based
on the assumption that the columns of the alignment are independent, in
which case the problem decouples into a simple recursion on the tree
(the ``Felsenstein'' or ``sum-product'' recursion~\cite{Felsenstein1981}).
Such an approach can introduce numerous artifacts, however, both in the
inferred phylogeny~\cite{Redelings2005,Redelings2007Baliphy2,Wong2008}, 
and in the inferred alignment~\cite{Roshan2006TreeAlignImprov,Nelesen2008}.

It is also possible to iterate the solution of the MSA problem and the tree 
inference problem~\cite{Liu2009a,Liu2012Sate}, which can be viewed as a heuristic
methodology for attempting to perform joint inference.  The drawbacks of 
these systems include a lack of theoretical understanding, the difficulty
of getting calibrated confidence intervals, and over-alignment problems
\cite{Schwartz2006,Lunter2004}.  

Finally, other methods have focused on analyzing only pairs of sequences at 
a time~\cite{Saitou1987NJ,Roch2010Matrices,Schwartz2006,Bradley2009FastStatAlign}.
While this approach can considerably simplify computation~\cite{Holmes2004RNAStruct,
Crawford2012BD}, it has the disadvantage that it is not based on an underlying
joint posterior probability distribution.

In the current paper we present a new approach to the joint probabilistic inference 
of trees and alignments.  Our approach is based on a model that is closely 
related to TKF91, altering only the insertion process while leaving the 
deletion and substitution processes intact.  Surprisingly, this relatively 
small change has a major mathematical consequence---under the new model 
evolutionary paths have an equivalent global description as a Poisson 
process on the phylogenetic tree.  We are then able to exploit standard 
results for Poisson processes (notably, Poisson thinning) to obtain 
significant computational leverage on the problem of computing the joint 
probability of a tree and an alignment.  Indeed, under the new model this 
computation decouples in such a way that this joint probability can be 
obtained in linear time (linear in the number of taxa and the sequence 
length), rather than in exponential time as in TKF91. 

Our new model has two descriptions: the first as a local continuous-time
Markov process that is closely related to the TKF91 model, and the second
as a global Poisson process.  We treat the latter as the fundamental description
and refer to the new process as the \emph{Poisson Indel Process} (PIP).  The
new description not only sheds light on computational issues, but it also 
opens up new ways to extend evolutionary models, allowing, for example, models 
that incorporate structural constraints and slipped-strand mispairing phenomena.

Under the Poisson process representation, another interesting perspective on our 
process is to view it as a string-valued counterpart to stochastic Dollo 
models~\cite{Alekseyenko2008,Nicholls2006}, which are defined on finite state spaces.
In particular, the general idea of the two steps generation process used in Section~\ref{sec:pscp-model} 
has antecedents in the literature on probabilistic modeling of morphological or lexical 
characters, but the literature did not address  the string-valued processes that are our focus here.

The remainder of the paper is organized as follows.  Section~\ref{sec:bg}
provides some basic background on the TKF91 model.  In Section~\ref{sec:pscp-model}
presents the PIP model, in both its local and global formulations.
Section~\ref{sec:pscp-comp-aspects} delves into the computational aspects 
of inference under the PIP model, describing the linear-time algorithm for 
computing the exact marginal density of an MSA and a tree.  In Section~\ref{sec:pscp-exp}
we present an empirical evaluation of the inference algorithm, and finally we
present our conclusions in Section~\ref{sec:pscp-discussion}.

\section{Background}\label{sec:prelim}
\label{sec:bg}

We begin by giving a brief overview of the TKF91 model.  Instead of following
the standard treatment based on differential equations, we present a Doob-Gillespie 
view of the model \cite{Doob1945DoobGillespieAlgo,Gillespie1977DoobGillespieAlgo}
that will be useful in our subsequent development.

Let us assume that at some point in time $t$, a sequence has length $n$.  
In the TKF91 process, the sequence stays unchanged for a random interval of 
time $\Delta t$, and after this interval, a single random mutation (substitution, 
insertion or deletion) alters the sequence.  This is achieved by defining a
total of $3n+1$ independent exponential random variables, $n$ of which 
correspond to deletion of a single character, $n$ of which correspond to 
the mutation of a single character and $n+1$ of which yield insertions after 
one of the $n$ characters (including one ``immortal'' position at the leftmost 
position in the string).  These $3n+1$ exponential random variables are 
simulated in parallel, and the value of the smallest of these random variables 
determines $\Delta t$.  The index of the winner determines the nature of the 
event at time $\Delta t$ (whether it is a substitution, deletion or insertion).

The random variables corresponding to a deletion have exponential rate $\mutkf$ 
while those corresponding to an insertion have exponential rate $\lambdatkf$.  
If the event is a mutation, a multinomial random variable with parameters 
obtained from the substitution rate matrix $\theta$ is drawn to determine the 
new value of the character.  Finally, if an insertion occurs, a multinomial 
random variable is drawn to determine the value of the new character, with 
parameters generally taken from the stationary distribution of $\theta$.  

This describes the evolution of a string of characters along a single edge
of a phylogenetic tree.  The extension to the entire phylogeny is straightforward;
we simply visit the tree in preorder and apply the single-edge process to 
each edge.  The distribution of the sequences at the root is generally assumed 
to be the stationary distribution of the single-edge process (conceptually, 
the distribution obtained along an infinitely-long edge).

Although the TKF91 model is reversible (and the PIP model as well, as we prove in Section~\ref{sec:properties}), making the location of the root 
unidentifiable, it is useful to assume for simplicity that an arbitrary root 
has been picked, and we will make that assumption throughout.  The likelihood 
is not affected by this arbitrary choice.

\section{The Poisson Indel Process}
\label{sec:pscp-model}

In this section we introduce the \emph{Poisson Indel Process} (PIP).
This process has two descriptions, a local description which is closely
related to the TKF91 model, and a global description as a Poisson process.

We require some additional notation (see Figure~\ref{fig:pip-notation}).
A phylogeny $\tau$ will be viewed as a continuous set of points, and its topology 
will be denoted by $(\vertexset, \edgeset)$, where $\vertexset \subset 
\tau$ is equal to the finite subset containing the branching points, the leaves  $\leavesset\subset \vertexset$ and the   
root $\Omega$, and where $\edgeset$ is the set of edges.  Parent nodes will be 
denoted by $\parent(v)$, for $ v\in\vertexset$, and the branch lengths by $b(v)$, 
which is the length of the edge from $\parent(v)$ to $v$.   
For any $x \in \tau$ (whether $x$ is a branch point in $\vertexset$, or 
an intermediate point on an edge), we write $\tau_x$ for the rooted phylogenetic 
subtree of $\tau$ rooted at $x$ (dropping all points in the original tree that are not descendants
of $x$).  Finally, the set of characters (nucleotides or amino acids) will be denoted $\Sigma$.

\begin{figure}[t] %
\begin{center}
\includegraphics[width=3in]{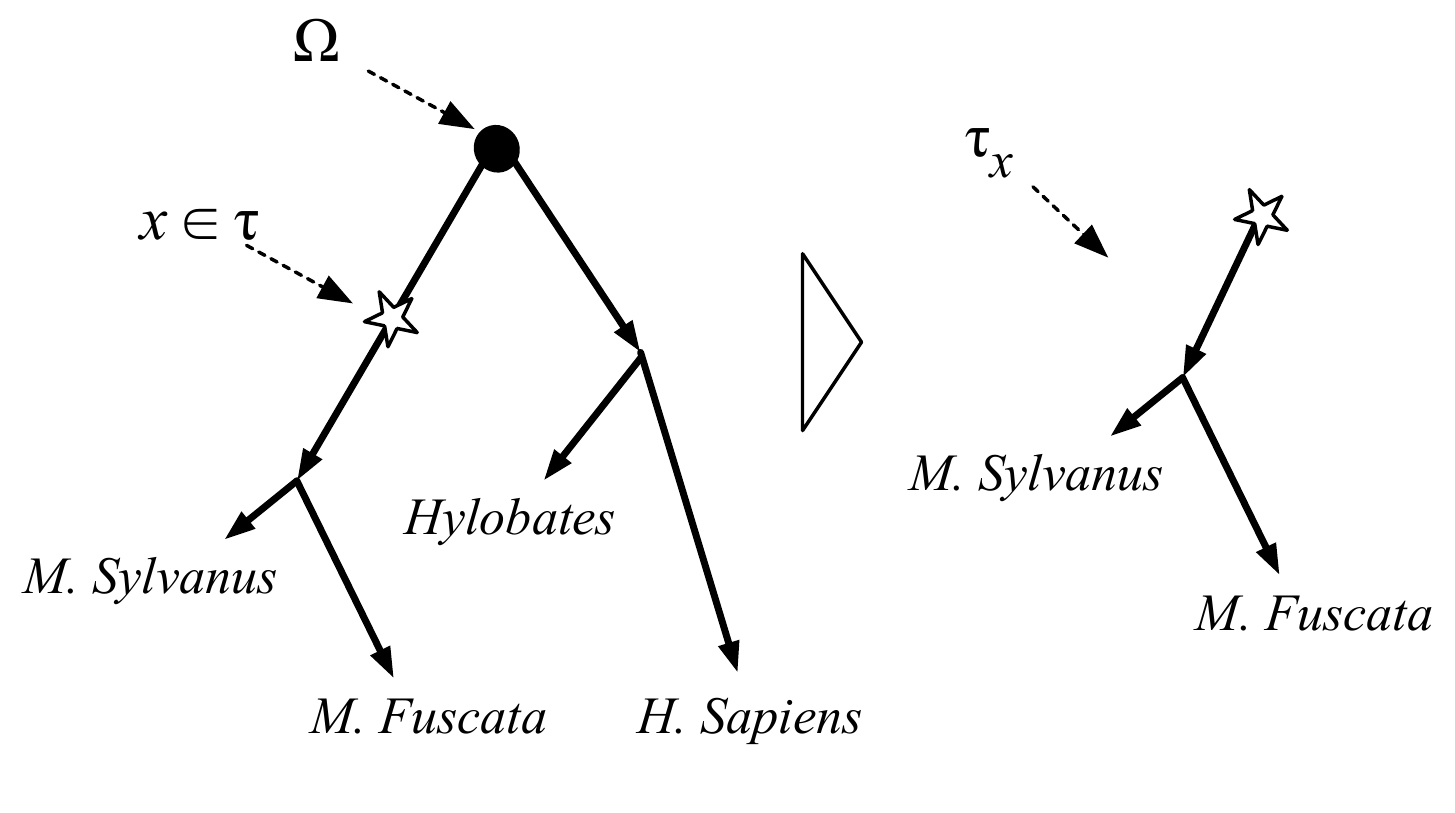} 
  \caption{Notation used for describing the PIP.  Given a phylogenetic tree 
  $\tau$ and a point $x\in\tau$ on that tree, $\tau_x$ is defined as the 
  subtree rooted at $x$.}
   \label{fig:pip-notation}
\end{center}
\end{figure}

\subsection{Local description}\label{sec:local-descr}

The stochastic process we propose has a local description that is very similar 
to the TKF91 process, the only change being that the insertion rate no longer
depends on the sequence length.  Therefore, instead of using $3n+1$ competing 
exponential random variables to determine the next event as in the TKF91 model 
($n$ for substitutions, $n+1$ for insertions, and $n$ for deletions), we now  
have $2n+1$ variables ($n$ for substitutions, $1$ for insertion, with rate 
$\lambda$, and $n$ for deletion, each of rate $\mu$).  When an insertion 
occurs, its position is selected uniformly at random.\footnote{More precisely, 
assume there is a real number $r_i$ in the interval $[0,1]$ assigned to each 
character in the string in increasing order: $0 < r_1 < r_2 < \dots < r_n < 1$.  
When an insertion occurs, sample a new real number $r'$ uniformly in the interval 
$[0,1]$ and insert the new character at the unique position (with probability one) 
such that an increasing sequence of real numbers $0  < \dots < r' <  \dots <  1$ 
is maintained.}  We assume that the process is initialized by sampling a 
Poisson-distributed number of characters, with parameter $\lambda/\mu$.  
Each character is sampled independently and identically according to the 
stationary distribution of $\theta$.

Note that if $\lambda / \lambdatkf$ is an integer, and the sequence has length $(\lambda / \lambdatkf) -1$ at some 
point in time, the distribution over the time and type of the next mutation 
is the same as in TKF91, by using the fact that the minimum of exponential 
variables with $\lambda_i$ is exponential, with rate equal to the sum of the 
$\lambda_i$.  However, in general the distributions are different.  We
discuss some of the biological aspects of these differences in 
Section~\ref{sec:pscp-discussion}; for now we focus on the computational
and statistical aspects of the PIP model.

\subsection{Poisson process representation}
\label{sec:poisson-descr}

We turn to a seemingly very different process for associating character 
strings with a phylogeny.  This process consists of two steps, the first 
involving insertions and the second involving deletions and substitutions. 

In the first step, depicted in Figure~\ref{fig:pip-gen}A, a multiset of 
insertion points is sampled from a Poisson process defined on the phylogeny $\tau$~\cite{Huelsenbeck1999Poisson}.  The rate measure for this Poisson process has atomic mass at the 
root of the tree; hence the need for multisets rather than simple point sets. 
Except for the root, no other points on the tree have an atomic mass (in particular, and in contrast to population genetics models, 
the probability that evolutionary events occur at branching points is zero). 
We denote this multiset of insertion points by $\ppsample$.

\begin{figure}[p] %
\begin{center}
\includegraphics[width=2.5in]{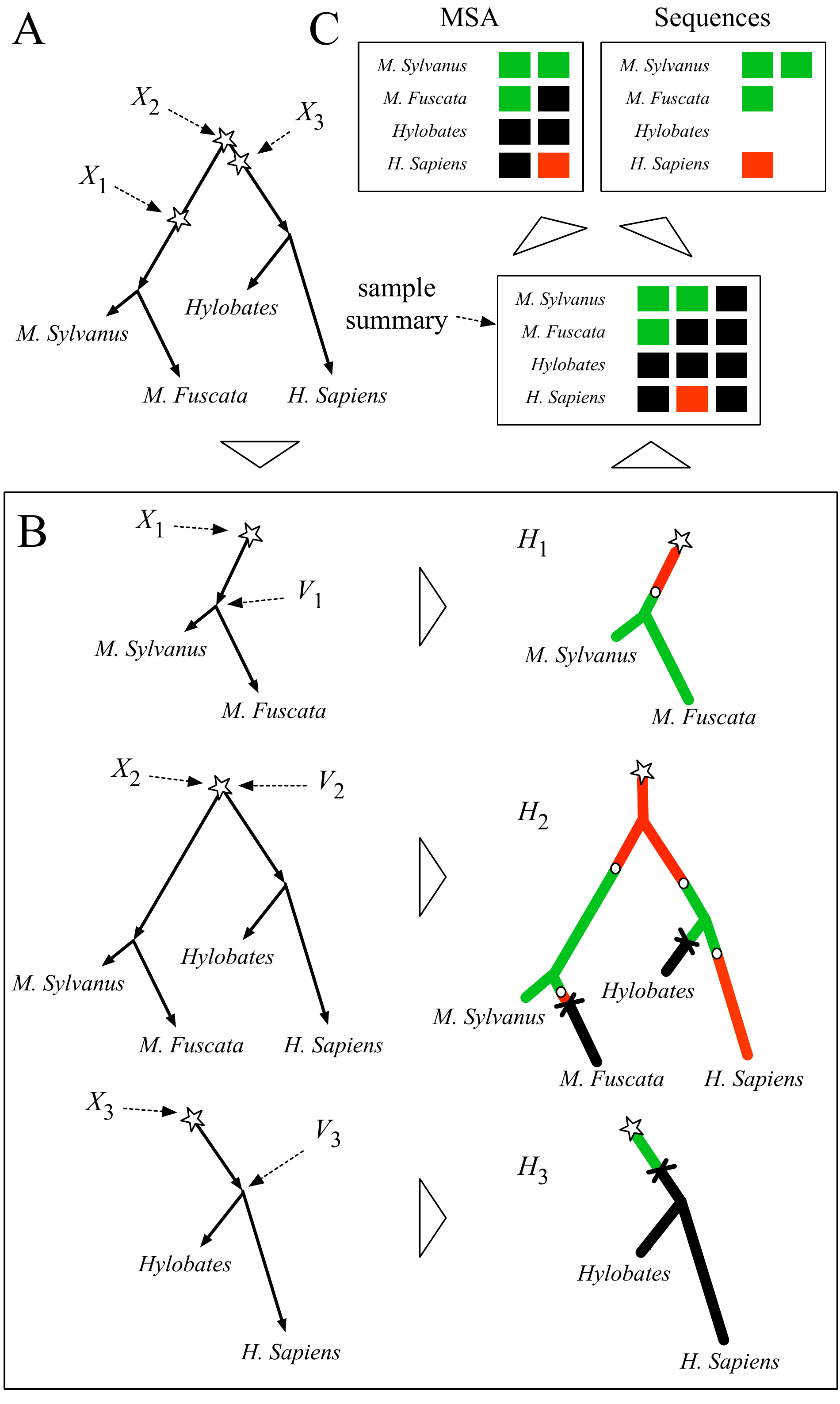} 
  \caption{Example of a PIP sample.  Here $\Sigma$ has two symbols, represented by red and green squares, and the absorbing deletion symbol $\gap$ is represented in black. (A) A sample from a Poisson process on $\tau$.  (B) Each sampled point corresponds to a rooted tree on which a CTMC path is sampled.  (C) The alignments and sequences are obtained as a deterministic function of the first two steps.}
  \label{fig:pip-gen}
\end{center}
\end{figure}

In the second step, we visit the insertion points one at a time.  The 
order of the visits of the insertions is sampled uniformly at random, 
$(X_1, X_2, \dots, X_I) \sim \Perm(\ppsample)$.  An insertion visit consists 
of two substeps.  First, we extract the directed subtree rooted at the 
insertion location $X_i$.  Examples of these subtrees are shown in 
Figure~\ref{fig:pip-gen}B, left.  Second, we simulate the fate of the inserted 
character along $\tau_{X_i}$.  This is done via a \emph{substitution-deletion} 
CTMC whose state space $\Sigmagap = \Sigma \cup \{\gap\}$ consists of the
basic alphabet $\Sigma$ augmented with an empty string symbol $\gap$.  As
shown in Figure~\ref{fig:pip-gen}B, right, the substitution-deletion
CTMC yields paths along subtrees in which a single character either mutates
or is deleted.  The latter event, represented by $\gap$, is an absorbing state.

We define a \emph{homology path} $H_i$ as the single-character history 
generated by a substitution-deletion CTMC along a phylogeny.  If a point 
$x \in \tau$ is a descendant of the insertion $X_i$, $H_i(x)$ is set to 
the state of the substitution-deletion CTMC at $x$.  If $x \in \tau$ is 
not a descendant of $X_i$, we set $H_i(x)$ to the absorbing symbol $\gap$.
Thus, formally, a homology path $H_i$ is a random map from any point on 
$\tau$ to $\Sigmagap$.  

Given a set of homology paths for each inserted character index $i$, 
the sequence at any point on the tree, $x\in \tau$, is obtained as follows 
(see Figure~\ref{fig:pip-gen}C, right).  First, we construct a list of all the 
values taken by $H_i(x)$ at the given point: $(H_1(x), H_2(x), \dots, H_I(x))$.  
Second, we remove from the list any characters that are equal to the absorbing 
symbol $\gap$.  The string obtained thereby is denoted by $Y(x)$.  The set
of observed data comprises the values of $Y$ at the leaves of the tree: 
$\observationset = \{(v, Y(v)) : v \in \leavesset\}$.

We can also construct an MSA $M$ from a set of homology paths (see 
Figure~\ref{fig:pip-gen}C, left).  From each homology path $H_i$, we
extract the characters at the leaves, arranging these characters in
a column.  Delete any column in which all of the characters are the
character $\gap$.  Arranging these columns in the order of the visits to
the insertion points.  The resulting matrix, whose entries range over
the augmented alphabet $\Sigmagap$, is the MSA $M$. 

For a given rooted phylogenetic tree $\tau$, we will denote by 
$p_\tau(m)$ the marginal probability that this process generates an 
MSA $m$, integrating over all homology paths, $p_\tau(m) = \P(M = m)$.
For joint inference, we make the phylogenetic tree $T$ random, with a distribution specified by a prior with density $p(\tau)$.

\subsection{Characterization}\label{sec:properties}

In this section we show that the local and the global descriptions of
the PIP given in the previous two subsections are in fact alternative 
descriptions of the same string-valued stochastic process.  In stating 
our theorem, we let $\nu$ denote the rate measure characterizing the 
insertion process in the global description, and let $Q$ and $\pi$ 
denote the transition matrix and the initial distribution for the 
substitution-deletion CTMC.
\begin{theorem}\label{prop:equivalence}
Let $\tau$ be a phylogenetic tree with an arbitrary rooting, and let us 
denote the Lebesgue measure on $\tau$ by the same symbol.  For any insertion 
rate $\lambda > 0$, deletion rate $\mu > 0$, and reversible substitution 
rate matrix $\theta$, the local and global processes described in 
Sections~\ref{sec:local-descr} and \ref{sec:poisson-descr} coincide 
if we set, for all $\sigma, \sigma'\in\Sigmagap$:
\begin{align*}
\nu(\ud x) &= \lambda \left(\tau(\ud x) + \frac{1}{\mu}\delta_\Omega(\ud x) \right), \\
Q_{\sigma,\sigma'} &= \bracearraycond{-\sum_{\sigma''\neq \sigma'} Q_{\sigma,\sigma''}
&\textrm{if }\sigma=\sigma' \\ 0&\textrm{if }\sigma=\gap \\ \mu&\textrm{if }\sigma'=\gap \\ 
\theta_{\sigma,\sigma'}&\textrm{o.w.,}}
\end{align*}
and set $\pi$ to be the quasi-stationary distribution of $Q$ \cite{Buiculescu1972Quasi}.  
\end{theorem}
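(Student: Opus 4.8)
The plan is to exhibit both descriptions as producing the same three random ingredients---a multiset of insertion points on $\tau$, a single-character history (homology path) on the subtree rooted at each insertion point, and a total order on the insertion points---and then to observe that the sequences $Y(\cdot)$ and the MSA $M$ are the \emph{same} deterministic readout of these ingredients in either formulation. It therefore suffices to prove that the joint law of the insertion multiset $\ppsample$, the homology paths $(H_i)_i$, and the ordering of the insertion points produced by the local process of Section~\ref{sec:local-descr} agrees with the law prescribed by the global process of Section~\ref{sec:poisson-descr} under the stated $\nu$, $Q$ and $\pi$.

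First I would show that the insertion points of the local process form a Poisson process with rate measure $\nu$. Working from the Doob--Gillespie construction, the single insertion clock fires at rate $\lambda$ \emph{independently of the current sequence length} $n$; by memorylessness of the exponential this makes the insertion events along any branch a homogeneous rate-$\lambda$ Poisson process in branch-length (Lebesgue) coordinates, and, crucially, renders this point process autonomous, i.e.\ independent of the substitution--deletion fates of the characters. Because the tree dynamics merely copy the sequence at each branch point and then evolve the descendant branches conditionally independently, superposition of these independent per-branch processes yields a Poisson process on all of $\tau$ with rate $\lambda\,\tau(\ud x)$. The initialization, a $\Poi(\lambda/\mu)$ number of characters placed at $\Omega$, contributes exactly the atom $\frac{\lambda}{\mu}\delta_\Omega(\ud x)$, and summing gives $\nu$. (That $\lambda/\mu$ is the correct root mass is internally consistent, since the sequence length evolves as an $M/M/\infty$ birth--death chain with birth rate $\lambda$ and death rate $\mu n$, whose stationary law is $\Poi(\lambda/\mu)$.)

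Next, conditioning on $\ppsample$, I would identify the fate of each inserted character with an independent substitution--deletion CTMC. A single character is substituted at rates $\theta_{\sigma,\sigma'}$ and deleted---absorbed into $\gap$---at rate $\mu$, independently of the other characters, since in the Doob--Gillespie construction distinct characters use distinct independent clocks and the constant insertion rate removes the coupling present in TKF91; these are precisely the off-diagonal entries specified for $Q$, with $\gap$ absorbing. It then remains to match the initial law: the local process draws each new character from the stationary distribution $\pi_\theta$ of $\theta$, and I would verify that $\pi_\theta$ equals the quasi-stationary distribution $\pi$ of $Q$. Restricting $Q$ to $\Sigma$ gives the sub-generator $\theta-\mu I$, whose semigroup factors as $e^{-\mu t}e^{t\theta}$; since $\pi_\theta$ (satisfying $\pi_\theta\theta=0$) is the top left eigenvector of $\theta$, it is the top left eigenvector of $\theta-\mu I$ with eigenvalue $-\mu$, hence the quasi-stationary distribution. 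Finally, because the local process assigns each character an i.i.d.\ $\Unifd[0,1]$ position label (the footnote construction), the induced left-to-right order of the insertion points is a uniformly random permutation independent of all else, matching $(X_1,\dots,X_I)\sim\Perm(\ppsample)$.

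With the three ingredients matched in distribution and the MSA read off by the same deterministic map, the two processes coincide. I expect the Poisson-process step of the second paragraph to be the crux: the whole argument hinges on the insertion rate being state-independent---exactly the single modification distinguishing the PIP from TKF91---and care is needed to justify rigorously that the insertion point process can be peeled off as an autonomous Poisson process that is independent of the per-character histories it subsequently indexes.
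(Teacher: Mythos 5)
Your proposal is correct, but it runs in the opposite direction from the paper's argument and uses a genuinely different decomposition. The paper proves the equivalence globally-to-locally: it enumerates the edges in preorder and shows by induction that (i) the string length at the ancestral endpoint of every edge is $\Poi(\lambda/\mu)$ in both descriptions---the inductive step being a stationarity lemma, proved via the Coloring Theorem, that a $\Poi(\lambda/\mu)$ count at the top of an edge again yields $\Poi(\lambda/\mu)$ at the bottom once surviving old characters and surviving new insertions are combined---and (ii) that, conditionally on the current string $Y(x)$ at an interior point of an edge, the waiting times for insertion, deletion and substitution under the global description are exponential with rates $\lambda$, $\mu|Y(x)|$ and the appropriate $\theta$-rates, via the Poisson Interval Theorem and the Doob--Gillespie characterization of CTMCs. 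You instead go locally-to-globally: you peel the insertion clock off as an autonomous rate-$\lambda$ Poisson process (using exactly that its rate is state-independent), superpose over branches and add the root atom to recover $\nu$, then condition on the insertion points to identify the per-character fates with independent $Q$-chains. Your route dispenses with the preorder induction and the stationarity lemma entirely (the $\Poi(\lambda/\mu)$ length at every node becomes a corollary rather than an ingredient of the proof), and it supplies something the paper only asserts in a remark after the theorem: the verification that the quasi-stationary distribution of $Q$ is $\pi_\theta$ on $\Sigma$, via the factorization $e^{t(\theta-\mu I)}=e^{-\mu t}e^{t\theta}$. The one step to make fully rigorous is the one you flag yourself: the cleanest way is to realize the Doob--Gillespie dynamics from independent primitives (a rate-$\lambda$ Poisson process per branch, plus an independent uniform position label and an independent killed CTMC path for each insertion and each root character), check by memorylessness and superposition of exponentials that this realization has the prescribed local jump rates, and then observe that the realization is literally the global description.
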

The proof is given in the \sirefproofmainprop.
Note that in the case of interest here, where the rate of deletion does not 
depend on the character being deleted, $\pi_\sigma$ is equal to the entry 
of the stationary distribution of $\theta$ corresponding to $\sigma$ when 
$\sigma \neq \gap$, and zero otherwise.
The following result establishes some basic properties of the PIP model.  
Its proof can be found in the \sirefproofmainprop.
\begin{proposition}\label{prop:reversibility} 
For all $\mu, \lambda > 0$ and reversible rate matrix $\theta$, the PIP model 
is reversible, with a stationary length distribution given by a Poisson distribution 
with mean $\lambda/\mu$.
\end{proposition}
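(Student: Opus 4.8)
The plan is to establish the result at the level of the \emph{local} single-edge description of Section~\ref{sec:local-descr}, since time-reversibility of the edge process is exactly what makes the root location immaterial, and the stationary length distribution is then read off as a marginal. I would work directly with the string-valued continuous-time Markov chain whose state space is $\Sigma^\ast$ (finite sequences over $\Sigma$), and verify that the detailed-balance equations $\rho(s)\,q(s,s') = \rho(s')\,q(s',s)$ hold for the candidate measure
\[
\rho(s) \;=\; e^{-\lambda/\mu}\,\frac{(\lambda/\mu)^{n}}{n!}\,\prod_{i=1}^{n}\pi_\theta(s_i),
\qquad s=(s_1,\dots,s_n)\in\Sigma^{n},
\]
where $\pi_\theta$ denotes the stationary distribution of the reversible matrix $\theta$. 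Since $\sum_{a\in\Sigma}\pi_\theta(a)=1$, this $\rho$ is a genuine probability measure, and its marginal on the length $n$ is precisely the Poisson distribution with mean $\lambda/\mu$; so once detailed balance is checked, both claims follow at once (detailed balance implies $\rho$ is stationary, and a measure satisfying detailed balance makes the chain reversible).

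Detailed balance only needs to be verified across the two kinds of elementary transitions, since all others have zero rate in both directions. For a \textbf{substitution}, $s$ and $s'$ agree except at one position where they carry $a$ and $b$ respectively; the rates are $q(s,s')=\theta_{a,b}$ and $q(s',s)=\theta_{b,a}$, while $\rho(s)/\rho(s')=\pi_\theta(a)/\pi_\theta(b)$, so the balance equation reduces exactly to $\pi_\theta(a)\theta_{a,b}=\pi_\theta(b)\theta_{b,a}$, the reversibility hypothesis on $\theta$. For an \textbf{insertion/deletion}, $s'$ is obtained from $s$ (length $n$) by inserting a character $c$ into one of the $n+1$ gaps; the constant-rate insertion with a uniformly chosen position gives $q(s,s')=\lambda\,\pi_\theta(c)/(n+1)$, while the per-site deletion gives $q(s',s)=\mu$. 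Using $\rho(s')/\rho(s)=(\lambda/\mu)\,\pi_\theta(c)/(n+1)$, both sides of the balance equation equal $\rho(s)\,\lambda\,\pi_\theta(c)/(n+1)$, so the $n$-dependent factors cancel identically. This cancellation is the crux: it is precisely the combination of a \emph{length-independent} insertion rate $\lambda$ (spread uniformly over $n+1$ gaps) with a \emph{per-character} deletion rate $\mu$ (so that the aggregate death rate from length $n$ is $n\mu$) that yields both detailed balance and the Poisson $(\lambda/\mu)$ length marginal---this is exactly where PIP departs from TKF91.

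The main obstacle is not analytic but combinatorial bookkeeping, and I would be careful about two points. First, the state space should be taken as unlabeled sequences, so when adjacent characters coincide a single pair $(s,s')$ may be joined by several insertion (resp.\ deletion) channels; one checks that the multiplicities agree on the two sides and detailed balance survives in aggregate. Second, to pass from the single edge to the whole phylogeny one invokes the standard consequence of edge-reversibility together with a root drawn from $\rho$: the joint law of the node states is invariant under re-rooting (the ``pulley'' argument), which is why the likelihood does not depend on the arbitrary root chosen in Section~\ref{sec:bg}. I would close by noting the M/M/$\infty$ analogy---immigration at constant rate $\lambda$, each ``customer'' persisting for an $\mathrm{Exp}(\mu)$ lifetime while performing a reversible $\theta$-walk---which makes the Poisson stationary length and reversibility intuitive and serves as a sanity check on the computation above.
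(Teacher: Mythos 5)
Your proof is correct, but it takes a genuinely different route from the paper's. You work entirely in the local description: you exhibit the candidate product measure $\rho$ on $\Sigma^\ast$ and verify detailed balance of the string-valued generator transition by transition, so that reversibility and the $\Poi(\lambda/\mu)$ length marginal drop out together. The paper instead argues in the global Poisson-process description: the Poisson length claim is Lemma~\ref{lemma:stat}, proved by splitting the insertion points on a single edge into the four classes $N_1,\dots,N_4$ (present at the root only, at both endpoints, at the leaf only, at neither) and invoking the Coloring Theorem together with Lemma~\ref{lemma:exponential}; reversibility is then obtained by writing the joint probability $h(n_1,n_2,n_3,n_4)$ in product form and observing that it is invariant under the transposition $(1\ 3)$, with reversibility of $\theta$ handling the character values. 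Your argument is more elementary and self-contained (it does not route through Theorem~\ref{prop:equivalence} or any Poisson machinery), and the detailed-balance computation makes transparent exactly where the length-independent insertion rate $\lambda$ paired with the per-character deletion rate $\mu$ is used; the paper's argument has the advantage of exercising the same Coloring-Theorem decomposition that drives the likelihood computation, and of directly producing the two-endpoint exchangeability that re-rooting actually requires. Two minor points: your multiplicity remark for repeated adjacent characters does check out (the number of insertion slots turning $s$ into $s'$ equals the number of deletion positions turning $s'$ into $s$, so the common factor cancels on both sides of detailed balance); and since the generator is unbounded on $\Sigma^\ast$ you should note non-explosiveness before concluding that detailed balance yields a stationary reversible law---immediate here because the length process is dominated by a birth--death chain with constant birth rate $\lambda$ and death rate $n\mu$.
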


The Poisson stationary length distribution represents a modeling advantage of PIP over TKF91, which has a geometrically distributed stationary distribution. 
Based on a study of protein-length distributions for the three domains of life \cite{Zhang2000ProteinLenDist}, the Poisson distribution has been suggested \cite{Miklos2003PoiLen} as a more adequate length distribution.

From Proposition~\ref{prop:reversibility}, we can also obtain an 
alternative reparameterization of the PIP model, in terms of asymptotic expected 
length $\eta = \lambda/\mu$ and indel intensity $\zeta = \lambda \cdot \mu$.

\section{Computational Aspects}
\label{sec:pscp-comp-aspects}
\label{sec:pscp-marg-ll}

We turn to a consideration of the computational consequences of the 
Poisson representation of the PIP model.  We first consider how the Poisson 
process characterization allows us to compute the marginal likelihood, 
$p_\tau(m)$, in linear time; a significant improvement over methods
based on the TKF91 model.  We provide in \sirefcompaspects\ a brief discussion of the 
role that the marginal likelihood plays in inference.

To compute the marginal likelihood, $p_\tau(m)$, we first condition 
on the number of homology paths, $|\ppsample|$.  While the number of 
homology paths is random and unknown, we know that it can be no 
less than the number of columns $|m|$ in the postulated alignment $m$.  
We need to consider an unknown and unbounded number of birth events 
with no observed offsprings in the MSA, but as they are exchangeable, 
they can be marginalized analytically.  This is done as follows:
\begin{align*}
p_\tau(m) &= \E\big[ \P(M=m||\ppsample|) \big] \\
&= \sum_{n=|m|}^\infty \P(|\ppsample| = n) \cdot \binom{n}{|m|} \cdot 
(p(c_\emptyset))^{n-|m|} \prod_{c\in m} p(c),
\end{align*}
where the first factor captures the probability of sampling $n$ homology 
paths, the second, the number of ways to pick the $|m|$ observed homology paths (the columns, which contain at least one descendent character at the leaves) 
out of the $n$ paths, the factor $p(c) = \P(C=c)$ is the likelihood of 
a single MSA column $c$, and $c_\emptyset$ is a column with an absorbing 
deletion symbol at every leaf $v\in \leavesset$: $c_\emptyset \equiv \gap$ 
(in this section, we drop subscripts for column-specific random variables 
such as $C$, $H$ and $X$ since they are exchangeable). Note that such 
simplification is not possible in the TKF91 model, because the rate of insertion depends on the length of the internal sequences, and hence of the deletion events.

This expression can be simplified by introducing the function $\varphi$ 
defined as follows for all $z\in(0,1), k \in \{1, 2, \dots\}$:
\begin{align*} 
\varphi(z,k) &= \frac{1}{k!} \Vert \nu \Vert^k \exp\big\{ (z - 1) \Vert \nu \Vert \big\},\\
\Vert \nu \Vert &= \lambda\left(\Vert \tau \Vert + \frac{1}{\mu}\right),
\end{align*}
where $\Vert \tau \Vert$ is the normalization of the measure $\tau$, 
i.e., the sum of all the branch lengths in the topology.  We show in 
the \sirefproofcomp\ that this yields the simple formula:
\begin{align*}
p_\tau(m) = \varphi(p(c_\emptyset),|m|) \prod_{c\in m} p(c).
\end{align*}

The next step is to compute the likelihood $p(c)$ of each individual 
alignment column $c$.  We do this by partitioning the computation into subcases depending on the   
location of the tree at which the insertion point $X$ is located for column $c$.  
More precisely, we look at the most recent common ancestor $V=v \in \vertexset$ 
of the characters in $c$ that are not equal to $\gap$ (see Figure~\ref{fig:pip-gen}B).  If $v \neq \Omega$, 
this corresponds to the most recent endpoint of the edge $e \in \edgeset$ 
where the insertion occurred.

Computing the prior probability of the insertion location is  
greatly simplified by the fact that $X | |\ppsample| \sim \bar \nu$ 
(see \cite{Kingman1993}, Chapter~2.4), where $\bar \nu = \nu / \Vert \nu \Vert$ 
denotes the probability obtained by normalizing the measure $\nu$.  We can therefore write:
\begin{align*} 
\P(V = v) &= \bracearraycond{\bar\nu(e\backslash\{\Omega\}) & 
\textrm{if }v\neq\Omega\\\bar \nu(\{\Omega\}) &\textrm{o.w.}} \\
& = \frac{1}{\Vert \tau \Vert + 1/\mu} \times \bracearraycond{b(v) &
\textrm{if }v\neq\Omega\\1/\mu &\textrm{o.w.}}
\end{align*}
Finally, the column probabilities are computed as follows:
\begin{align*}
\P(C = c) &= \sum_{v\in \vertexset} \P(V = v) \P(C = c | V = v) \\
&= \sum_{v\in \vertexset} \P(V = v) f_v,
\end{align*}
where $f_v$ is the output of a slight modification of Felsenstein's 
peeling recursion \cite{Felsenstein1981} applied on the subtree rooted 
at $v$ (the derivation for $f_v$ can be found in the \sirefproofcomp).
Since computing the peeling recursion for one column takes time 
$O(|\leavesset|)$, we get a total running time of $O(|\leavesset|\cdot |m|)$, 
where $|\leavesset|$ is the number of observed taxa, and $|m|$ is the 
number of columns in the alignment.

\section{Experiments}
\label{sec:pscp-exp}

We implemented a system based on our model that performs joint 
Bayesian inference of phylogenies and alignments.  We used this 
system to quantify the relative benefits of joint inference relative to 
separate inference under the PIP and TKF91 models; i.e., the benefits 
of inferring trees on accuracy of the inferred MSA and the benefits of 
inferring MSAs on the accuracy of the inferred tree.

We used synthetic data to assess the quality of the tree reconstructions 
produced by PIP, compared to the reconstructions of PhyML 2.4.4, a 
widely-used platform for phylogenetic tree inference \cite{Guindon2004}.  
We also compared the inferred MSAs to those produced by Clustal 2.0.12  
\cite{Higgins1988}, a popular MSA inference system. 

While our implementation evaluated in this section is based on the Bayesian framework, we evaluate it using a frequentist methodology. More precisely, we use Bayes estimators (described in \sirefcompaspects) to obtain two point estimates from the posterior, one for the MSA, and one for the phylogeny. Each point estimate is compared to the true alignment and tree. It is therefore possible to compare the method to the well-known frequentist methods mentioned above. 

In this study, we explored four types of potential improvements: (1) resampling trees and MSAs increasing the quality of inferred MSAs, 
	compared to resampling only MSAs; (2) resampling trees and MSAs increasing the quality of inferred trees, 
	compared to resampling only trees; (3) resampling trees increasing the quality of inferred trees, 
	compared to trees inferred by PhyML, and fixing the MSA to the one 
	produced by Clustal; (4) resampling MSAs increasing the quality of inferred MSAs, compared 
	to MSAs inferred by Clustal, and fixing the tree to the one produced 
	by PhyML.
The results are shown in \resultstable. 
These experiments were 
based on 100 replicas, each having 7 taxa at the leaves, a topology sampled 
from the uniform distribution, branch lengths sampled from rate 2 exponential 
distributions, indels generated from the PIP with parameters $\eta=100, \zeta=1$,  
and nucleotides sampled from the Kimura two-parameter model (K2P)~\cite{kimura1980}.

\begin{table}[t] 
\caption{PIP results on simulated data} 
\begin{center} 
\begin{tabular}{llcccc} 
\toprule 
\multirow{2}{*}{\rotatebox{90}{{\footnotesize Exp.}}} & Tree resampled? & \nolabel & \yeslabel & \nolabel & \yeslabel \\ 
& MSA resampled? & \nolabel & \nolabel & \yeslabel & \yeslabel \\ 
\midrule 
\midrule 
\multirow{3}{*}{\rotatebox{90}{MSAs}} & Edge recall (SP) 
& {\bf 0.25} & - & 0.22 & 0.24 \\ 
& Edge Precision & 0.22 & - & 0.56 & {\bf 0.58} \\ 
& Edge F1 & 0.23 & - & 0.31 & {\bf 0.32} \\ 
\midrule 
\multirow{2}{*}{\rotatebox{90}{Trees}} & Partition Metric & 0.24 & 0.22 & - & {\bf 0.19} \\ 
& Robinson-Foulds & 0.45 & 0.38 & - & {\bf 0.33} \\ 
\bottomrule 
\end{tabular} 
\label{table:pscp-synt} 
\end{center} 
\end{table} 

We measured the quality of MSA reconstructions using the F1 score, defined as the harmonic mean of the reconstructed alignment edge recall (called the sum-of-pairs score or developer's score in the MSA literature \cite{Sauder2000MSAMetrics}) and alignment edge precision (modeler's score \cite{Zachariah2005MSAMetrics}). We measured the quality of tree reconstructions using the partition (symmetric clade difference) metric \cite{Bourque1978} and the weighted Robinson-Foulds metric \cite{Robinson1979Metric}. Relative improvements were obtained by computing the absolute value of the quality difference (in terms of the F1 for alignments, and Robinson-Foulds distance for trees), divided by the initial value of the measure. We report relative improvements averaged over the 100 replicas.

We observed improvements of all four types.  Comparing Edge F1 relative 
improvements to Robinson-Foulds relative improvements, the relative additional 
improvement of type (2) is larger (13\%) than that of type (1) (3\%).  
Overall (i.e., comparing the baselines to the joint system), the full 
improvements of both trees and MSAs are substantial: 43\% Edge F1 improvement, 
and 27\% Robinson-Foulds improvement.
See Figure~\ref{fig:relative-improvements} for a summary of the relative 
improvements. 

\begin{figure}[t] %
\begin{center}
\includegraphics[width=3in]{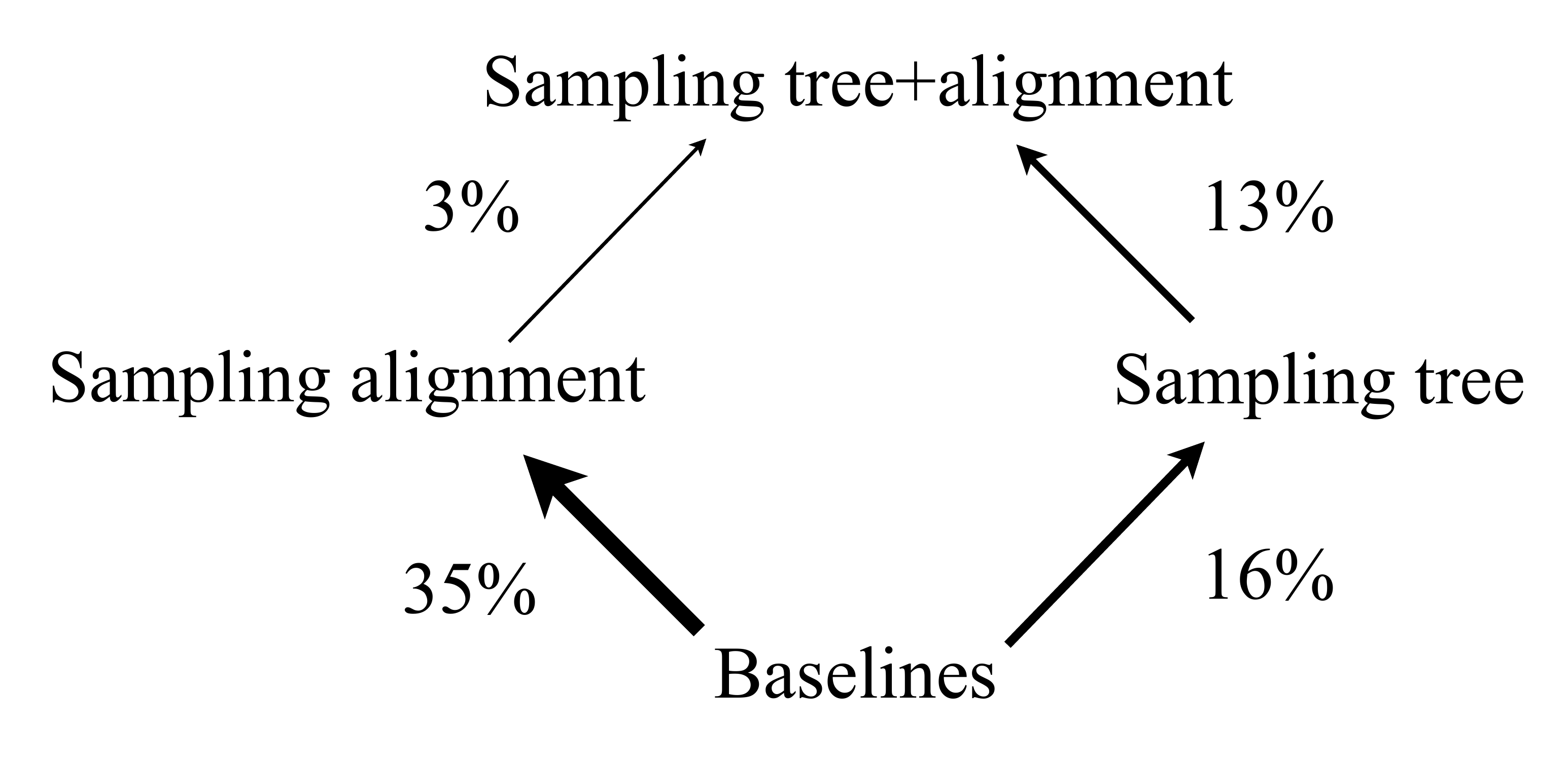} 
  \caption{Relative improvements for enabling each component of the sampler.  Arrows on the left are relative alignment improvements, and arrows on the right are relative tree improvements.  }
  \label{fig:relative-improvements}
\end{center}
\end{figure}

We also tested our system on data generated from the TKF91 model instead of 
the PIP model.  We used the same tree distribution and number of replicas as 
in the previous experiments, and the same generating TKF91 parameters as 
\cite{Holmes2001}.
We again observed improvements over the baselines, both in terms of MSA 
and tree quality.  For MSAs, the relative improvement over the baseline 
was actually larger on the TKF91-generated data than on the PIP-generated 
data (47\% versus 43\%, as measured by Edge F1 improvement over Clustal), 
and lower but still substantial for phylogenetic trees (13\% versus 27\%, as measured by 
Robinson-Foulds improvement over PhyML).

It should be noted that the MCMC kernels used in these experiments (described in the \sirefproposals) are based on simple Metropolis-Hastings proposals, and can therefore suffer from high rejection rates in large datasets. Fortunately, previous work in the statistical alignment literature has developed sophisticated  MCMC kernels, some of which could be applied to inference in our model, for example \cite{Redelings2005}. Another potential direction would be to replace MCMC by a Sequential Monte Carlo posterior approximation \cite{Bouchard2012Phylogenetic}.

It should also be emphasized that point indels is certainly not the exclusive driving force behind sequence evolution. In particular, ``long indels'' (atomic insertions and deletions of long segments, with a probability higher than the product of their point indels), are also prominent. As a consequence, any system purely based on point indels will have significant biases on biological data. In practice, these biases will introduce three undesirable artifacts: overestimation of the branch lengths; ``gappy alignments,'' where the reconstructed MSA has many scattered gaps instead of a few long ones; and the related ``ragged end'' problem, where the prefix and suffix of sequences are poorly aligned because observed sequences are often truncated in practice.  In the next section, we propose ways to address these limitations.

\section{Discussion}
\label{sec:pscp-discussion}

We have presented a novel string-valued evolutionary model that can be used
for joint inference of phylogenies and multiple sequence alignments.  As with
its predecessor, the TKF91 model, the new model can be used to capture the 
homology of characters evolving on a phylogenetic tree under insertion, deletion 
and substitution events.  Its advantage over TKF91 is that it 
permits a representation as a Poisson process on the tree.  This representation
has the consequence that the marginal likelihood of a tree and an alignment
(marginalizing over ancestral states) can be computed in time linear in the
number of taxa, rather than exponential as in the case of TKF91. 
Poisson representations have played an important role in pure substitution processes 
\cite{Huelsenbeck2000Poisson,Nicholls2006,Alekseyenko2008}, but to the best of 
knowledge, this is the first time Poisson representations are used for indel inference.

Although the insertion process in TKF91 might be argued to be more realistic
biologically than that of the PIP model, in that it allows the insertion rate
to vary as the sequence length varies, in the common setting in which all of 
the sequences being aligned are of roughly similar lengths, this extra degree 
of freedom may be of limited value for inference.  Indeed, in our experiments 
we saw that the PIP model can perform well even when data are generated from 
the TKF91 model.  We might also note that there are biological processes in 
which insertions originate from a source that is extrinsic to the sequence 
(e.g., viruses or other genomic regions), in which case the constant-rate
assumption of PIP may actually be preferred.

It is also important to acknowledge, however, that neither TKF91 nor PIP are
accurate representations of biology.  Their use in phylogenetic modeling reflects
the hope that the statistical inferences they permit---most notably taking into 
account the effect of indels on the tree topology---will nonetheless be useful 
as data accrue.  This hope is more likely to be realized in larger datasets, 
motivating our goal of obtaining a method that scales to larger sets of species.  
But both models should also be viewed as jumping-off points for further modeling 
that is more faithful to the biology while retaining the inferential power of 
the basic models.  For example, there has been significant work on extending TKF91 
to models that capture the ``long indels'' that arise biologically but are not 
captured by the basic model~\cite{Thorne1992,Miklos2001,miklos2004}.  

In this regard, we wish to note that the Poisson representation of the PIP 
model provides new avenues for extension that are not available within 
the TKF91 framework.  In particular, the superposition property of Poisson 
processes makes it possible to combine the PIP model with other models
that follow a Poisson law.  For example, if the location $X'$ of long indels, 
slipped-strand mispairing~\cite{Kelchner2000ManualAlign} or other non-local 
changes follow a Poisson point process, the union $U=X \cup X'$ of the non-local 
changes with the point indels $X$ provided by a PIP will also be distributed 
according to a  Poisson process.  Moreover, the thinning property of Poisson 
processes provides a principled approach to inference for such superpositions.   
Indeed, an MCMC sampler for the superposition model can be constructed as 
follows: first, we can exploit the decomposition to analytically marginalize  
$X$ (using the algorithm presented in this paper).  Second, the other terms 
of the superposition and the sequences at these point in time can be represented 
explicitly as  auxiliary variables.  Since we have an efficient algorithm 
for computing the marginal likelihood, the auxiliary variables can be resampled 
easily.  Note that designing an irreducible sampler without marginalizing $X$ 
would be difficult: integrating out $X$ creates a bridge of positive probability 
between any pair of patterns of non-local changes.

Under the parameterization of the process used in this paper, the model assumes 
both an equal deletion rate for all characters, and a uniform probability over 
inserted characters. It is worth noting that our inference algorithm can be 
modified to handle models relaxing both assumptions, by replacing the 
calculation of $\beta(v)$ in \sirefproofcomp\ by a quasi-stationary 
distribution calculation~\cite{Buiculescu1972Quasi}. It would be interesting 
to use this idea to investigate what non-uniformities are present in biological indel data.

Finally, another avenue to improve PIP models is to make the insertion 
rate mean measure more realistic: instead of being uniform across the tree, 
it could be modeled using a prior distribution, hence forming a Cox 
process \cite{Cox1955}.  This would be most useful when the sequences 
under study have large length or indel intensity variations across sites and branches~\cite{Sniret2006IndelProfiling}.

\subsection*{Acknowledgments}

We would like to thank Bastien Boussau, Ian Holmes, Michael Newton and Marc Suchard for their comments and suggestions. 
This work was partially supported by a grant from the Office of Naval 
Research under Contract Number N00014-11-1-0688, by grant K22 HG00056  
from the National Institutes of Health, and by grant SciDAC BER KP110201 
from the Department of Energy. 

\bibliographystyle{plain} 

\bibliographystyle{pnas} 
\bibliography{references}

\begin{thebibliography}{10}

\bibitem{Alekseyenko2008}
A.~Alekseyenko, C.~Lee, and M.~A. Suchard.
\newblock {W}agner and {D}ollo: a stochastic duet by composing two parsimonious
  solos.
\newblock {\em Systematic Biology}, 57 (5):772--784, 2008.

\bibitem{Allison1992HMM}
L.~Allison, C.~S. Wallace, and C.~N Yee.
\newblock Finite-state models in the alignment of macromolecules.
\newblock {\em J. Mol. Evol.}, 35:77--89, 1992.

\bibitem{Bouchard-Cote2009b}
A.~Bouchard-C\^ot\'e, M.~I. Jordan, and D.~Klein.
\newblock Efficient inference in phylogenetic {I}n{D}el trees.
\newblock {\em In {P}roceedings of {A}dvances in {N}eural {I}nformation
  {P}rocessing {S}ystems}, 21:177--184, 2009.

\bibitem{Bouchard2012Phylogenetic}
A~Bouchard-C\^ot\'e, S.~Sankararaman, and M.I. Jordan.
\newblock {P}hylogenetic {I}nference via {S}equential {M}onte {C}arlo.
\newblock {\em Systematic Biology}, 61:579--593, 2012.

\bibitem{Bourque1978}
M.~Bourque.
\newblock {\em Arbres de Steiner et reseaux dont certains sommets sont \'a
  localisation variable}.
\newblock PhD thesis, Universit\'e de Montr\'eal, 1978.

\bibitem{Bradley2009FastStatAlign}
R.~K. Bradley, A.~Roberts, M.~Smoot, S.~Juvekar, J.~Do, C.~Dewey, I.~Holmes,
  and L.~Pachter.
\newblock Fast statistical alignment.
\newblock {\em PLoS Comput Biol.}, 5(5):e1000392, 2009.

\bibitem{Buiculescu1972Quasi}
M.~Buiculescu.
\newblock Quasi-stationary distributions for continuous-time {M}arkov processes
  with a denumerable set of states.
\newblock {\em Rev. Roum. Math. Pures et Appl.}, XVII:1013--1023, 1972.

\bibitem{Cox1955}
D.~R. Cox.
\newblock Some statistical methods connected with series of events.
\newblock {\em J. R. Statist. Soc. Ser. B}, 17(2):129--164, 1955.

\bibitem{Crawford2012BD}
F.~W. Crawford and M.~A. Suchard.
\newblock Transition probabilities for general birth-death processes with
  applications in ecology, genetics and evolution.
\newblock {\em Journal of Mathematical Biology}, (In Press), 2012.

\bibitem{Delyon1988Annealing}
B.~Delyon.
\newblock Convergence of the simulated annealing algorithm.
\newblock Technical report, Massachusetts Institute of Technology, 1988.

\bibitem{Doob1945DoobGillespieAlgo}
J.~L. Doob.
\newblock Markoff chains: Denumerable case.
\newblock {\em Transactions of the American Mathematical Society}, 58
  (3):455--473, 1945.

\bibitem{Dreyer2008}
M.~Dreyer, J.~R. Smith, and J.~Eisner.
\newblock Latent-variable modeling of string transductions with finite-state
  methods.
\newblock {\em In Proceedings of the Conference on Empirical Methods in Natural
  Language Processing}, 13:1080--1089, 2008.

\bibitem{Felsenstein1981}
J.~Felsenstein.
\newblock Evolutionary trees from {D}{N}{A} sequences: A maximum likelihood
  approach.
\newblock {\em Journal of Molecular Evolution}, 17:368--376, 1981.

\bibitem{Gillespie1977DoobGillespieAlgo}
D.~T. Gillespie.
\newblock Exact stochastic simulation of coupled chemical reactions.
\newblock {\em The Journal of Physical Chemistry}, 81 (25):2340--2361, 1977.

\bibitem{Guindon2004}
S.~Guindon and O.~Gascuel.
\newblock A simple, fast, and accurate algorithm to estimate large phylogenies
  by maximum likelihood.
\newblock {\em Systematic Biology}, 52(5):696--704, 2004.

\bibitem{Hein1990Unified}
J.~Hein.
\newblock A unified approach to phylogenies and alignments.
\newblock {\em Methods in Enzymology}, 183:625--944, 1990.

\bibitem{Hein2001TKF}
J.~Hein.
\newblock An algorithm for statistical alignment of sequences related by a
  binary tree.
\newblock {\em Pac. Symp. Biocomp.}, 6:179--190, 2001.

\bibitem{Hein2003RecursionsPNAS}
J.~Hein, J.~Jensen, and C.~Pedersen.
\newblock Recursions for statistical multiple alignment.
\newblock {\em Proc Natl Acad Sci U S A}, 100(25):14960--5, 2003.

\bibitem{Higgins1988}
D.~G. Higgins and P.~M. Sharp.
\newblock {C}{L}{U}{S}{T}{A}{L}: a package for performing multiple sequence
  alignment on a microcomputer.
\newblock {\em Gene}, 73:237--244, 1988.

\bibitem{Holmes2004RNAStruct}
I.~Holmes.
\newblock A probabilistic model for the evolution of {R}{N}{A} structure.
\newblock {\em BMC Bioinformatics}, 5:166, 2004.

\bibitem{Holmes2001}
I.~Holmes and W.~J. Bruno.
\newblock Evolutionary {H}{M}{M}: A {B}ayesian approach to multiple alignment.
\newblock {\em Bioinformatics}, 17:803--820, 2001.

\bibitem{Huelsenbeck2000Poisson}
J.~P. Huelsenbeck, B.~Larget, and D.~L. Swofford.
\newblock A compound poisson process for relaxing the molecular clock.
\newblock {\em Genetics}, 154:1879--1892, 2000.

\bibitem{Huelsenbeck1999Poisson}
J.~P. Huelsenbeck and R.~Nielsen.
\newblock Effect of non-independent substitution on phylogenetic accuracy.
\newblock {\em Systematic Biology}, 48:317--328, 1999.

\bibitem{Jensen2002}
J.~Jensen and J.~Hein.
\newblock Gibbs sampler for statistical multiple alignment.
\newblock Technical report, Dept of Theor Stat, U Aarhus, 2002.

\bibitem{Kelchner2000ManualAlign}
S.~A. Kelchner.
\newblock The evolution of non-coding chloroplast {D}{N}{A} and its application
  in plant systematics.
\newblock {\em Annals of the Missouri Botanical Garden}, 87:482--498, 2000.

\bibitem{kimura1980}
M.~Kimura.
\newblock A simple method for estimating evolutionary rate of base
  substitutions through comparative studies of nucleotide sequences.
\newblock {\em J. Mol. Evol.}, 16:111--120, 1980.

\bibitem{Kingman1993}
J.~F.~C. Kingman.
\newblock {\em Poisson Processes}.
\newblock Oxford Studies in Probabilities, 1993.

\bibitem{Knudsen2003LongApprox}
B.~Knudsen and M.~Miyamoto.
\newblock Sequence alignments and pair hidden {M}arkov models using
  evolutionary history.
\newblock {\em J Mol Biol.}, 333:453--460, 2003.

\bibitem{Krogh1994HMM}
A.~Krogh, M.~Brown, I.~S. Mian, K.~Sjolander, and D.~Haussler.
\newblock Hidden {M}arkov models in computational biology: Applications to
  protein modeling.
\newblock {\em J. Mol. Biol.}, 235:1501--1531, 1994.

\bibitem{Lakner2008}
C.~Lakner, P.~van~der Mark, J.~P. Huelsenbeck, B.~Larget, and F.~Ronquist.
\newblock {Efficiency of Markov Chain Monte Carlo Tree Proposals in Bayesian
  Phylogenetics}.
\newblock {\em Systematic Biology}, 57(1):86--103, 2008.

\bibitem{Lancia1999}
G.~Lancia and R.~Ravi.
\newblock {G}{E}{S}{T}{A}{L}{T}: Genomic {S}teiner alignments.
\newblock {\em Notes in Computer Science}, 1645:101--114, 1999.

\bibitem{Liu2009a}
K.~Liu, S.~Raghavan, S.~Nelesen, C.~R. Linder, and T.~Warnow.
\newblock Rapid and accurate large-scale coestimation of sequence alignments
  and phylogenetic trees.
\newblock {\em Science}, 324(5934):1561--1564, 2009.

\bibitem{Liu2012Sate}
K.~Liu, T.~J. Warnow, M.~T. Holder, S.~M. Nelesen, J.~Yu, A.~P. Stamatakis, and
  C.~R. Linder.
\newblock {S}{A}{T}e-{I}{I}: very fast and accurate simultaneous estimation of
  multiple sequence alignments and phylogenetic trees.
\newblock {\em Syst Biol}, 61(1):90--106, 2012.

\bibitem{Loytynoja2012PRANK}
A.~L\"{o}ytynoja, A.~J. Vilella, and N.~Goldman.
\newblock Accurate extension of multiple sequence alignments using a
  phylogeny-aware graph algorithm.
\newblock {\em Bioinformatics}, 28:1684--1691, 2012.

\bibitem{Lunter2004}
G.~Lunter, A.~Drummond, I.~Mikl\'os, and J.~Hein.
\newblock Statistical alignment: Recent progress, new applications, and
  challenges.
\newblock In R.~Nielsen, editor, {\em Statistical methods in Molecular
  Evolution}, Series in Statistics in Health and Medicine. Springer Verlag,
  2004.

\bibitem{Lunter2005}
G.~Lunter, I.~Mikl\'os, A.~Drummond, J.~Jensen, and J.~Hein.
\newblock Bayesian coestimation of phylogeny and sequence alignment.
\newblock {\em BMC Bioinformatics}, 6(83), 2005.

\bibitem{Metzler2001}
D.~Metzler, R.~Fleissner, A.~Wakolbinger, and von Haeseler~A.
\newblock Assessing variability by joint sampling of alignments and mutation
  rates.
\newblock {\em J Mol Evol}, 53(6):660--669, 2001.

\bibitem{Miklos2003PoiLen}
I.~Mikl\'os.
\newblock Algorithm for statistical alignment of sequences derived from a
  {P}oisson sequence length distribution.
\newblock {\em Disc. Appl. Math.}, 127:79--84, 2003.

\bibitem{Miklos2003}
I.~Mikl\'os, A.~Drummond, G.~Lunter, and J.~Hein.
\newblock {\em Algorithms in Bioinformatics}, chapter Bayesian Phylogenetic
  Inference under a Statistical Insertion-Deletion Model.
\newblock Springer, 2003.

\bibitem{miklos2004}
I.~Mikl\'os, G.~A. Lunter, and I.~Holmes.
\newblock A long indel model for evolutionary sequence alignment.
\newblock {\em Mol Biol Evol.}, 21 (3):529--540, 2004.

\bibitem{Miklos2001}
I.~Mikl\'os and Z.~Toroczkai.
\newblock An improved model for statistical alignment.
\newblock {\em Lecture Notes in Computer Science}, 2149:1--10, 2001.

\bibitem{Mohri2009}
M.~Mohri.
\newblock {\em Handbook of Weighted Automata}, chapter~6, pages 213--254.
\newblock Monographs in Theoretical Computer Science. Springer, 2009.

\bibitem{Nelesen2008}
S.~Nelesen, K.~Liu, D.~Zhao, C.~R. Linder, and T.~Warnow.
\newblock {{T}he effect of the guide tree on multiple sequence alignments and
  subsequent phylogenetic analyses}.
\newblock {\em Pac Symp Biocomput}, 13:25--36, 2008.

\bibitem{Nicholls2006}
G.~Nicholls and R.~Gray.
\newblock {\em Phylogenetic methods and the prehistory languages}, chapter
  Quantifying uncertainty in a stochastic {D}ollo model of vocabulary
  evolution, pages 161--172.
\newblock McDonald Institute for Archaeological Research, 2006.

\bibitem{Novak2008StatAlign}
A.~Nov\'ak, I.~Mikl\'os, R.~Lyngsoe, and J.~Hein.
\newblock Stat{A}lign: An extendable software package for joint {B}ayesian
  estimation of alignments and evolutionary trees.
\newblock {\em Bioinformatics}, 24:2403--2404, 2008.

\bibitem{Redelings2005}
B.~D. Redelings and M.~A. Suchard.
\newblock Joint {B}ayesian estimation of alignment and phylogeny.
\newblock {\em Syst. Biol}, 54(3):401--418, 2005.

\bibitem{Redelings2007Baliphy2}
B.~D. Redelings and M.~A. Suchard.
\newblock Incorporating indel information into phylogeny estimation for rapidly
  emerging pathogens.
\newblock {\em BMC Evolutionary Biology}, 7(40), 2007.

\bibitem{Rivas2005}
E.~Rivas.
\newblock Evolutionary models for insertions and deletions in a probabilistic
  modeling framework.
\newblock {\em BMC Bioinformatics}, 6(1):63, 2005.

\bibitem{Robert2001}
C.~P. Robert.
\newblock {\em The {B}ayesian Choice: From Decision-Theoretic Foundations to
  Computational Implementation}.
\newblock Springer, 2001.

\bibitem{Robinson1979Metric}
D.F. Robinson and L.R. Foulds.
\newblock Comparison of weighted labelled trees.
\newblock {\em Lecture Notes in Mathematics}, 748:119--126, 1979.

\bibitem{Roch2010Matrices}
S.~Roch.
\newblock Toward extracting all phylogenetic information from matrices of
  evolutionary distances.
\newblock {\em Science}, 327(5971):1376--1379, 2010.

\bibitem{Roshan2006TreeAlignImprov}
U.~Roshan, D.~R. Livesay, and S.~Chikkagoudar.
\newblock Improving progressive alignment for phylogeny reconstruction using
  parsimonious guide-trees.
\newblock {\em In Proceedings of the IEEE Symposium on Bioinformatics and
  Bioengineering}, 6:159--164, 2006.

\bibitem{Saitou1987NJ}
N.~Saitou and M.~Nei.
\newblock The neighbor-joining method: a new method for reconstructing
  phylogenetic trees.
\newblock {\em Mol Biol Evol.}, 4(4):406-25, 1987.

\bibitem{Sankoff1975}
D.~Sankoff.
\newblock Minimal mutation trees of sequences.
\newblock {\em SIAM Journal on Applied Mathematics}, 28:35--42, 1975.

\bibitem{Sauder2000MSAMetrics}
J.~Sauder, J.~Arthur, and R.~Dunbrack.
\newblock Large-scale comparison of protein sequence alignment algorithms with
  structure alignments.
\newblock {\em Proteins}, 40:6--22, 2000.

\bibitem{Schwartz2006}
A.~Schwartz and L.~Pachter.
\newblock Multiple alignment by sequence annealing.
\newblock {\em Bioinformatics}, 23:e24--e29, 2006.

\bibitem{Searls1995}
D.~B. Searls and K.~P. Murphy.
\newblock Automata-theoretic models of mutation and alignment.
\newblock In {\em Proc Int Conf Intell Syst Mol Biol.}, 1995.

\bibitem{Snir2011ParsimoniousIndelHistories}
S.~Snir and L.~Pachter.
\newblock Tracing the most parsimonious indel history.
\newblock {\em Journal of Computational Biology}, 18(8):967--986, 2011.

\bibitem{Sniret2006IndelProfiling}
S.~Sniret and L.~Pachter.
\newblock Phylogenetic profiling of insertions and deletions in vertebrate
  genomes.
\newblock {\em Lecture Notes in Computer Science}, 3909:265--280, 2006.

\bibitem{Song2006Recurs}
Y.~S. Song.
\newblock A sufficient condition for reducing recursions in hidden {M}arkov
  models.
\newblock {\em Bulletin of Mathematical Biology}, 68:361--384, 2006.

\bibitem{Steel2001TKF}
M.~Steel and J.~Hein.
\newblock Applying the {T}horne-{K}ishino-{F}elsenstein model to sequence
  evolution on a star-shaped tree.
\newblock {\em Appl. Math. Let.}, 14:679--684, 2001.

\bibitem{Thorne1991a}
J.~L. Thorne, H.~Kishino, and J.~Felsenstein.
\newblock An evolutionary model for maximum likelihood alignment of {D}{N}{A}
  sequences.
\newblock {\em Journal of Molecular Evolution}, 33:114--124, 1991.

\bibitem{Thorne1992}
J.~L. Thorne, H.~Kishino, and J.~Felsenstein.
\newblock Inching toward reality: an improved likelihood model of sequence
  evolution.
\newblock {\em J. of Mol. Evol.}, 34:3--16, 1992.

\bibitem{Varon2010POY}
A.~Varon, L.~S. Vinh, and W.~C. Wheeler.
\newblock {P}{O}{Y} version 4: phylogenetic analysis using dynamic homologies.
\newblock {\em Cladistics}, 26:72--85, 2010.

\bibitem{Westesson2012IndelRecon}
O.~Westesson, G.~Lunter, B.~Paten, and I.~Holmes.
\newblock Accurate reconstruction of insertion-deletion histories by
  statistical phylogenetics.
\newblock {\em PLoS One}, 7(4):e34572, 2012.

\bibitem{Wheeler1994Malign}
W.~C. Wheeler and D.~S. Gladstein.
\newblock {M}{A}{L}{I}{G}{N}: A multiple sequence alignment program.
\newblock {\em J Hered}, 85 (5):417--418, 1994.

\bibitem{Wong2008}
K.~M. Wong, M.~A. Suchard, and J.~P. Huelsenbeck.
\newblock Alignment uncertainty and genomic analysis.
\newblock {\em Science}, 319(5862):473--476, 2008.

\bibitem{Zachariah2005MSAMetrics}
M.~Zachariah, G.~Crooks, S.~Holbrook, and S.~Brenner.
\newblock A generalized affine gap model significantly improves protein
  sequence alignment accuracy.
\newblock {\em Proteins}, 58:329--38, 2005.

\bibitem{Zhang2000ProteinLenDist}
J.~Zhang.
\newblock Protein length distributions for the three domains of life.
\newblock {\em Trends Genet.}, 16:107--109, 2000.

\end{thebibliography}

\newcommand\refequivalence{Theorem~1}
\newcommand\refreversibility{Proposition~2}

\appendix
 

\section{Proofs for the Main PIP Properties}
\label{ap:equiv}

In this section, we prove \refequivalence\ and \refreversibility.  
We begin by stating and proving two lemmas.

\begin{lemma}
\label{lemma:exponential}
Let $U \sim \Unifd(0,t)$ and $W \sim \Expd(\mu)$ be independent for 
fixed $t, \mu > 0$.  Then
\begin{align*} 
\P(W + U > t) = \frac{1-\exp(-t \mu)}{t \mu}.
\end{align*}
\end{lemma}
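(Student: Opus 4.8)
The plan is to condition on the uniform variable $U$ and then apply the survival function of the exponential. Since $U$ and $W$ are independent and $U$ has density $1/t$ on the interval $(0,t)$, the law of total probability gives
\[
\P(W+U>t) = \frac{1}{t}\int_0^t \P(W > t-u)\,\ud u .
\]
First I would observe that for every $u$ in the range of integration we have $0<u<t$, so that the argument $t-u$ is strictly positive. This is the one place where a little care is warranted, because positivity of $t-u$ is exactly what guarantees that the exponential tail formula applies on the nose, with no truncation at the origin: for $t-u>0$ we have $\P(W>t-u)=\exp(-\mu(t-u))$.

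Substituting this into the integral and changing variables via $s=t-u$ (so that $\ud s = -\ud u$, and the limits run from $s=t$ down to $s=0$) reduces the computation to $\frac{1}{t}\int_0^t \exp(-\mu s)\,\ud s$, which evaluates to $\frac{1}{t}\cdot\frac{1-\exp(-\mu t)}{\mu}$, i.e.\ the claimed quantity $\frac{1-\exp(-t\mu)}{t\mu}$.

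I do not anticipate any genuine obstacle: the statement is a one-line conditional computation, and the only thing to watch is the positivity of $t-u$ that makes the exponential survival formula exact throughout the range of integration. As a consistency check one could instead compute the complementary probability $\P(W+U\le t)$ by convolving the two densities, or examine the limiting regimes---as $\mu\to 0$ the right-hand side tends to $1$ (the event becomes almost sure), while for large $\mu t$ it decays like $1/(t\mu)$---both of which agree with intuition.
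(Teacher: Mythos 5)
Your proof is correct and follows essentially the same route as the paper's: condition on $U$, apply the exponential survival function, and integrate (the paper's displayed integral $\int_0^t \exp(-x\mu)/t \,\ud x$ is exactly your integral after the substitution $s = t-u$). Nothing to add.
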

\begin{proof}
By conditioning:
\begin{align*} 
\P(W + U > t) &= \E\big[\P(W + U > t | U) \big] \\
&= \int_0^t \frac{\exp(-x\mu) }{t} \ud x \\
&= \frac{1-\exp(-t \mu)}{t \mu}.
\end{align*}
\end{proof}

\begin{lemma}
\label{lemma:stat}
Let $\tau_0$ denote a degenerate topology consisting of a root $\Omega$ 
connected to a single leaf $v_0$ by an edge of length $t$.  Let $H_i$ be a 
homology path as defined in the main paper, with $\tau = \tau_0$.  
For all $x\in \tau$, define $I(x) = \{i : H_i(x) \neq \gap, 1 \le i \le I\}$ and:
\begin{align*} 
N &= |I(\Omega)| \\
N' &= |I(v_0)|.
\end{align*}
Then $N\sim \Poi(\lambda/\mu)$ implies $N' \sim \Poi(\lambda/\mu)$.
\end{lemma}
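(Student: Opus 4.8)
The plan is to count the characters that are alive at the leaf $v_0$ by partitioning them according to where their generating insertion occurred, and then to apply Poisson thinning to each group separately. Recall that $H_i(v_0) \neq \gap$ requires both that $v_0$ be a descendant of the insertion point $X_i$ and that the substitution-deletion CTMC not yet have reached the absorbing state at $v_0$. On the degenerate tree $\tau_0$ every insertion lies on the path from $\Omega$ to $v_0$, so the first condition is automatic; moreover, since the deletion rate $\mu$ does not depend on the current character, the absorbing event is governed by a single $\Expd(\mu)$ clock that is independent of the substitution dynamics. Hence a character survives to $v_0$ exactly when its deletion clock exceeds the remaining branch length, and I can ignore substitutions entirely for the purpose of counting $N'$.

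I would split the insertions into those placed at the root atom $\Omega$ and those placed in the interior of the edge; the two sets are independent. The characters alive at the root, counted by $N$, are precisely the root insertions (each is alive at $\Omega$ because the initial distribution $\pi$ places no mass on $\gap$). Since their deletion clocks are i.i.d.\ $\Expd(\mu)$ and independent of $N$, and each must survive the full length $t$, thinning the hypothesized $N \sim \Poi(\lambda/\mu)$ with retention probability $\exp(-\mu t)$ gives a $\Poi\!\big(\tfrac{\lambda}{\mu}\exp(-\mu t)\big)$ count of surviving root characters. The interior insertions form a Poisson process with mean measure $\lambda$ times Lebesgue on the edge, hence number $\Poi(\lambda t)$ with positions i.i.d.\ $\Unifd(0,t)$ given their count. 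Writing $U\sim\Unifd(0,t)$ for such a position and $W\sim\Expd(\mu)$ for its deletion clock, the character reaches $v_0$ iff $W > t-U$, i.e.\ $W+U>t$, an event of probability $(1-\exp(-t\mu))/(t\mu)$ by Lemma~\ref{lemma:exponential}. Thinning $\Poi(\lambda t)$ by this probability yields a $\Poi\!\big(\tfrac{\lambda}{\mu}(1-\exp(-\mu t))\big)$ count of surviving interior characters.

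Finally I would combine the two groups. The root-insertion and interior-insertion contributions come from disjoint parts of the insertion process and carry independent CTMC paths, so the two survivor counts are independent; by the superposition property $N'$ is their sum, which is Poisson with mean $\tfrac{\lambda}{\mu}\exp(-\mu t) + \tfrac{\lambda}{\mu}(1-\exp(-\mu t)) = \lambda/\mu$, giving $N'\sim\Poi(\lambda/\mu)$ as claimed. The step requiring the most care is the legitimacy of the two thinning operations: one must check that, conditionally on the number of insertions in each group, the individual survival indicators are i.i.d.\ and independent of that number, which is exactly where the independence of the homology paths and the character-independence of the deletion rate are used. Lemma~\ref{lemma:exponential} supplies the one genuine computation, converting a uniform insertion location into the correct per-character survival probability.
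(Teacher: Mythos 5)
Your proof is correct and takes essentially the same route as the paper's: the paper decomposes the survivors at $v_0$ into those inserted at the root atom versus those inserted in the interior of the edge, applies the Coloring (thinning) Theorem to each group to obtain $\Poi\!\left(\frac{\lambda}{\mu}e^{-\mu t}\right)$ and $\Poi\!\left(\frac{\lambda}{\mu}(1-e^{-\mu t})\right)$ counts via Lemma~\ref{lemma:exponential}, and sums them by superposition, exactly as you do.
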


\begin{proof} 
To prove the result, we decompose $N$ and $N'$ as follows
(see Figure~\ref{fig:reversibility}):
\begin{align*}
N_1 &= |I(\Omega) \backslash I(v_0)| \\
N_2 &= |I(\Omega) \cap I(v_0)| \\
N_3 &= |I(v_0) \backslash I(\Omega)| \\
N_4 &= |I \backslash I(\Omega) \backslash I(v_0)| \\
N &= N_1 + N_2 \\
N' &= N_2 + N_3.
\end{align*}
By the Coloring Theorem \cite{Kingman1993},
\begin{align*} 
N_2 \sim \Poi\left( \nu(\{\Omega\}) \P(W > t) \right),
\end{align*}
where $W$ is a rate $\mu$ exponential random variable, and $\nu$ is as 
in the condition of \refequivalence.  Therefore 
$N_2 \sim \Poi(\lambda \exp(-t\mu) / \mu)$.  Similarly, 
\begin{align*} 
N_3 \sim \Poi\left(  \nu(\tau \backslash \{\Omega\}) \P(W + U > t) \right),
\end{align*}
where $U \sim \Unifd(0,t)$, and therefore from Lemma~\ref{lemma:exponential}, 
$N_3 \sim \Poi(\lambda (1-\exp(-t\mu)) / \mu)$.  It follows that:
\begin{align*} 
N' &= N_2 + N_3 \\
&\sim \Poi\left(\frac{\lambda}{\mu} e^{-\mu} + 
\frac{\lambda}{\mu}\left(1 - e^{-\mu}\right)\right) \\
&= \Poi\left(\frac{\lambda}{\mu}\right),
\end{align*}
which concludes the proof of the lemma.
\end{proof}

\begin{figure}[t]  %
\begin{center}
\includegraphics[width=2in]{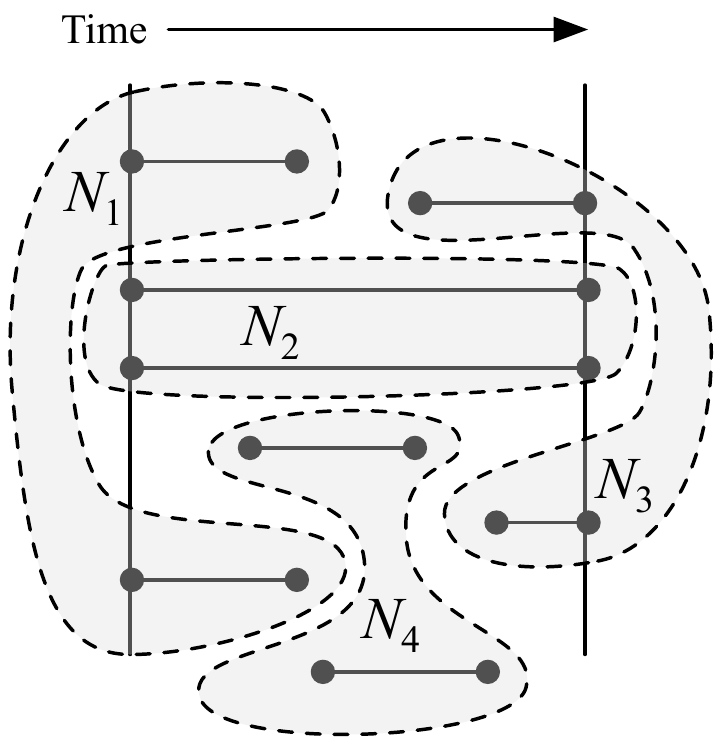} 
  \caption{Notation used in the appendix.  The horizontal lines denote the times 
   where each character is present in the sequence.  The vertical line on the left 
   denotes the sequence at $\Omega$, and the vertical line on the right, the sequence 
   at $v_0$.  The sites are decomposed depending on whether they are present at each 
   of two points $\Omega,v_0$ in  $\tau_0$.}
  \label{fig:reversibility}
\end{center}
\end{figure}

We can now prove \refequivalence: \\
\begin{proof} 
In order to establish the equivalence, it is enough to show that for all edges 
$e = (v \to v')$ in the tree, the following two properties hold:
\begin{enumerate} 
\item The distribution of the string length at the ancestral endpoint, 
$|Y(v)|$, is identical in the local and global descriptions: a Poisson 
distribution with rate $\lambda/\mu$.
\item The distribution of the number and locations of mutations that 
fall on $e\backslash\{v, v'\}$ are also identical in the local and global 
descriptions.
\end{enumerate}
We will enumerate the edges in the tree in preorder, using induction to
establish these two hypotheses on this list of edges.

In the base case, hypothesis~1 is satisfied by construction: the local 
description is initialized with a $\Poi(\lambda/\mu)$-distributed number 
of characters, and in the global description, the intensity measure $\nu$ 
of the Poisson process $\ppsample$ assigns a point mass $\lambda/\mu$ 
to $v = \Omega$.

To establish hypothesis~1 in the inductive case, let $e' = (v'' \to v)$ 
denote the parent edge.  By hypothesis~1 on $e'$, $|Y(v'')|\sim\Poid(\lambda/\mu)$, 
therefore by Lemma~\ref{lemma:stat} and hypothesis~2 on $e'$, hypothesis~1 
is satisfied on $e$ as well.

To establish hypothesis~2, it is enough to show that for all 
$x\in e\backslash\{v, v'\}$ the waiting time for each type of mutation 
given $Y(x)$ is exponential, with rates:
\begin{enumerate}[(a)]
  \item $\lambda$ for insertion,
  \item $\mu\cdot|Y(x)|$ for deletion, and
  \item $\sum_{\sigma\neq \gap} \theta_{\sigma,\sigma'} |Y(x)|_\sigma$ 
  for substitutions to $\sigma'\neq \gap$, where $|s|_\sigma$ denotes 
  the number of characters of type $\sigma \in \Sigma$ in the string $s\in\Sigma^*$.
\end{enumerate}
Item~(a) follows from the Poisson Interval Theorem \cite{Kingman1993}.  
Items~(b) and (c) follow from the standard Doob-Gillespie characterization 
of CTMCs: if $X_t$ is a CTMC with rate matrix $Q = (q_{i,j})$ and $Z_{i,j}$ 
are independent exponential random variables with rate $q_{i,j}$, then 
$$(\Delta, J) | (X_0 = i) \deq (\min_{j\neq i} Z_{i,j}, \argmin_{j\neq i} Z_{i,j}),$$ 
where $\Delta = \inf \{t : X_t \neq i\}$, $J = X_{\Delta}$.
\end{proof}

We now turn to \refreversibility\ and establish reversibility. \\
\begin{proof}
Let $h(n_1, n_2, n_3, n_4) = \P(N_i = n_i, i\in\{1,2,3,4\})$.  
Using reversibility of $\theta$, it is enough to show that $h$ is 
invariant under the permutation $(1\ 3)$; i.e., 
$h(n_1, n_2, n_3, n_4) = h(n_3, n_2, n_1, n_4)$.

We have that $h(n_1, n_2, n_3, n_4)$ is equal to:
{\footnotesize 
\begin{align*}
\P\Big(N_i = n_i, &\sum_i N_i = \sum_i n_i, N_1 + N_2 = n_1 + n_2, N_3+N_4 = n_3+n_4\Big)   \\
=&\  \P\Big(\sum_i N_i = \sum_i n_i\Big) \times \\
&\ \P\Big(N_1 + N_2 = n_1 + n_2, N_3+N_4 = n_3+n_4\Big|\sum_i N_i = \sum_i n_i\Big) \times \\
&\ \P(N_1 = n_1, N_2 = n_2 | N_1 + N_2 = n_1 + n_2) \times \\
&\ \P(N_3= n_3, N_4 = n_4 | N_3+N_4 = n_3+n_4) \\
=&\ f_1(n_1 + n_2 + n_3 + n_4) \times \\
&\ \left(\frac{1/\mu}{1/\mu + t}\right)^{n_1 + n_2} 
\left(\frac{t}{1/\mu+t}\right)^{n_3+n_4} \times \\
&\ \left(1 - e^{-\mu t}\right)^{n_1} f_2(n_2) \times \\
&\ \left(\frac{1 - e^{-\mu t}}{t \mu}\right)^{n_3} f_3(n_4),
\end{align*}}
where  only the dependencies of the functions $f_1, f_2$ and $f_3$ is 
 important in this argument, not their exact form.  By inspection, it is clear that 
$h$ is invariant under the permutation $(1\ 3)$.
\end{proof}

\section{Proofs for the Likelihood Computation}

First, we show how the function $\varphi$, defined in 
the main paper, simplifies the computation 
of $p_\tau(m)$:
\begin{align*}
p_\tau(m) &= \E\big[ \P(M=m||\ppsample|) \big] \\
&= \sum_{n=|m|}^\infty \P(|\ppsample| = n) \cdot \binom{n}{|m|} \cdot 
(p(c_\emptyset))^{n-|m|} \prod_{c\in m} p(c) \\
&= \frac{e^{\Vert \nu \Vert} \prod_{c\in m} p(c)}{|m|! (p(c_\emptyset))^{|m|}} 
\sum_{n = |m|}^\infty \frac{\left(\Vert \nu \Vert p(c_\emptyset) \right)^n}{(n - |m|)!} \\
&=\frac{e^{\Vert \nu \Vert} \left(\Vert \nu \Vert p(c_\emptyset) \right)^{|m|}  
\prod_{c\in m} p(c)}{|m|! (p(c_\emptyset))^{|m|}} \sum_{k = 0}^\infty 
\frac{\left(\Vert \nu \Vert p(c_\emptyset) \right)^k}{k!} \\
&= \frac{e^{\Vert \nu \Vert} \left(\Vert \nu \Vert p(c_\emptyset) \right)^{|m|}  
\prod_{c\in m} p(c)}{|m|! (p(c_\emptyset))^{|m|}} 
\exp\left(\Vert \nu \Vert p(c_\emptyset) \right) \\
&= \varphi(p(c_\emptyset),|m|) \prod_{c\in m} p(c).
\end{align*}

Next, we show how to compute $f_v = \P(C = c|V=v)$ for all $v\in \vertexset$. The recursions for $f_v$ are similar to those found in stochastic Dollo models \cite{Alekseyenko2008}.
Note first that $f_v$ can be zero for some vertices.  To see where and why, 
consider the subset of leaves $S$ that that have an extant nucleotide  
 in the current column $c$, $S = \{v\in\leavesset : H(v)\neq\gap\}$.  Then $f_v$ will be non-zero only 
for the vertices ancestral to all the leaves in $S$.  Let us call this set of 
vertices $A$ (see Figure~\ref{fig:pscp-fel}).

\begin{figure}[t] %
\begin{center}
\includegraphics[width=2in]{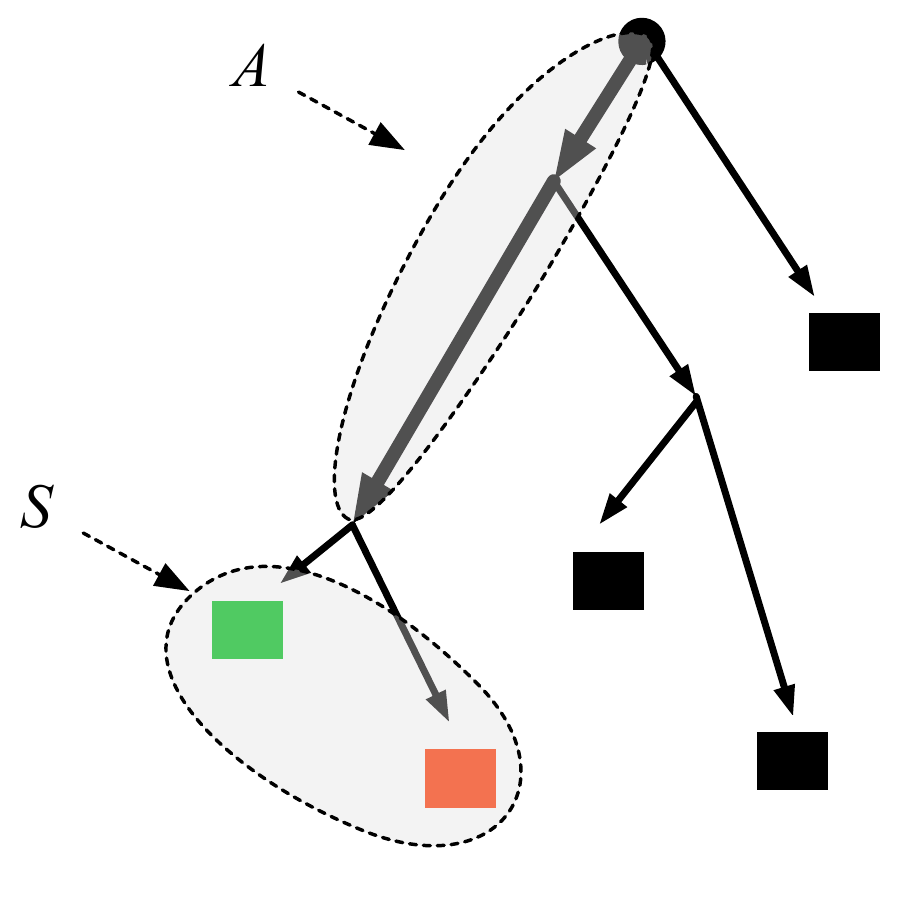} 
  \caption{Given a set $S$ of leaves $v$ with $H(v) \neq \gap$, we define the set 
  $A$ of vertices with nonzero modified Felsenstein peeling weight to be those ancestral 
  to the leaves in $S$. In this example, $A$ contains three vertices.}
  \label{fig:pscp-fel}
\end{center}
\end{figure}

To compute $f_v$ on the remaining vertices, we introduce an intermediate
variable, $\tilde f_v = \P(C = c|V=v, H(v)\neq \gap)$.  This variable
can be computed using the standard Felsenstein peeling recursion (dynamic
programming) as follows:
{\small 
\begin{align}
\tilde f_v(\sigma) &= \bracearraycond{\1(c(v) = \sigma) &
\textrm{if }v\in \leavesset \\ \sum_{\sigma'\in\Sigmagap} 
\exp(b(v)Q)_{\sigma,\sigma'} \prod_{w\in\child(v)} \tilde f_{w}(\sigma')&\textrm{o.w.}} \label{eq:tfvs} \\
\tilde f_v &= \sum_{\sigma\in\Sigma} \pi_\sigma \tilde f_v(\sigma). \label{eq:tfv}
\end{align}}

From Lemma~\ref{lemma:exponential}, we have an expression for the survival 
probability at $v$ given an insertion on the edge $(\parent(v)\to v)$:
\begin{align} 
\beta(v) &= \P(H(v) \neq \gap|V=v) \notag \\
&= \frac{1}{b(v)} \frac{1}{\mu} \left( 1 - e^{-\mu b(v)} \right). \label{eq:betas}
\end{align}

Finally, for $c\neq c_\emptyset$, we have:
\begin{align}
f_v &= \P(C = c|V=v) \notag \\
&= \E[ \P(C = c|V=v, H(v)) ] \notag \\
&= \label{eq:fv} \bracearraycond{\tilde f_v&\textrm{if }v=\Omega\\ 
\1[v\in A] \beta(v) \tilde f_v &\textrm{o.w.,}}
\end{align}
and for $c = c_\emptyset$:
\begin{align}
f_v &= \bracearraycond{\tilde f_v&\textrm{if }v=\Omega\\ 1 + \beta(v) 
(\tilde f_v - 1) &\textrm{o.w.}} \label{eq:fvgap}
\end{align}

\section{Proposal distributions}
\label{sec:pscp-proposals}

To perform full joint inference over trees and alignments using Markov
chain Monte Carlo, several objects need to be resampled: the tree 
topology, the branch lengths, the MSA, and the parameters.

For trees and branch lengths, we use standard proposal mechanisms \cite{Lakner2008}.
Our MSA proposal is inspired by the proposal of \cite{Lunter2005}, avoiding the mixing problems of auxiliary variables~\cite{Holmes2001,Jensen2002,Bouchard-Cote2009b}. 
Our proposal distribution consists of two steps.  
First, we partition the leaves into two sets $A, B$.
Given a current MSA $m_0$, the support of the proposal is the set $S$ of MSAs $m$ satisfying the following constraints:
\begin{enumerate}
  \item If $e$ has both endpoints in $A$ (or both in $B$), then 
  $e\in m \Longleftrightarrow e\in m_0$.
  \item If $e,e'$ have both endpoints in $A$ (or both in $B$), then 
  $e \prec_m e' \Longleftrightarrow e \prec_{m_0} e'$.
\end{enumerate}
The notation $\prec_m$ is based on the concept of posets over the columns 
(and edges) of an MSA \cite{Schwartz2006}.   

We propose an element $m^*\in S$ with probability proportional to 
$\prod_{c\in m^*} p(c)$.  The set $S$ has exponential size, but can be 
sampled efficiently using standard pairwise alignment dynamic programming.  
A Metropolis-Hastings ratio is then computed to correct for $\varphi$.  
Note that the proposal induces an irreducible chain: one possible outcome 
of the move is to remove all links between two groups of sequences.  
The chain can therefore move to the empty MSA and then construct any MSA 
incrementally.

For the parameters, we used multiplicative proposals in the $(\lambda,\mu)$  
parameterization \cite{Lakner2008}.

\section{Computational Aspects}

In this section, we provide a brief discussion of the 
role that the marginal likelihood plays in both frequentist and Bayesian inference methods.

\subsection{Maximum likelihood}

In the case of maximum likelihood, the overall inference problem 
involves optimizing over the marginal likelihood:
\begin{align*}
\sup_{\tau\in\phylospace(\leavesset),m\in\alignspace(y)} \log p_\tau(m),
\end{align*}
where $\tau$ ranges over phylogenies on the leaves $\leavesset$, and $m$ 
ranges over the alignments consistent with the observed sequences $y$.
This optimization problem can be approached using simulated annealing, 
where a candidate phylogeny and MSA pair $(\tau',m')$ is proposed at each 
step $i$, and is accepted (meaning that it replaces the previous candidate 
$(\tau,m)$) according to a sequence of acceptance functions $f^{(i)}(p,p')$ 
depending only on the marginal probabilities $p = p_\tau(m), p' = p_{\tau'}(m')$.  
Provided $\lim_{i\to\infty} f^{(i)}(p,p') = \1[p' > p]$ sufficiently slowly, 
this algorithm converges to the maximum likelihood phylogeny and 
MSA \cite{Delyon1988Annealing}.

\subsection{Bayes estimators}

In order to define a Bayes estimator, one typically specifies a decision 
space $D$ (for example the space of MSAs, or the space of multifurcating 
tree topologies, or both), a projection into this space, $(\tau,m) \mapsto 
\rho(\tau,m)\in D$, and a loss function $l:D\to [0,\infty)$ on $D$ (for example, 
for tree topologies, the symmetric clade difference, or partition metric 
\cite{Bourque1978}; and for alignments, $1-$ the edge recall or Sum-of-Pairs 
(SP) score~\cite{Robert2001}).

Given these objects, the optimal decision in the Bayesian framework 
(also known as the consensus tree or alignment), is obtained by minimizing over $d\in D$ the risk 
$
\E[l(d,\rho(T,M))|\observationset].
$
This expectation is intractable, so it is 
usually approximated with the empirical distribution of the output 
$(\tau^{(i)}, m^{(i)})$ of an Markov chain Monte Carlo (MCMC) algorithm.
Producing MCMC samples boils down to computing acceptance ratios of the form:
$$\frac{p(\tau') p_{\tau'}(m')}{p(\tau)p_{\tau}(m)} \cdot 
\frac{q_{(\tau',m')}(\tau,m)}{q_{(\tau,m)}(\tau',m')},
$$
for some proposal having density $q$ with respect to a shared reference 
measure on $\phylospace(\leavesset)\times\alignspace(y)$.
We thus see that for both maximum likelihood and joint Bayesian inference
of the MSA and phylogeny the key problem is that of computing the marginal
likelihood $p_\tau(m)$. 

\section{Pseudocode and Example}
 
In this section, we summarize the likelihood computation. We also give a concrete numerical example to illustrate the calculation.
 
\begin{enumerate}
\item Inputs:
\begin{enumerate}
\item PIP parameter values $(\lambda, \mu)$, substitution matrix $\theta$ over $\Sigma$. \\ \emph{Example: $(\lambda, \mu) = (2.0,1.0), \Sigma = \{\textrm{a}\}$}
\item Rooted phylogenetic tree $\tau$ \\ \emph{Example: $\tau = ((v_2:1.0,v_3:1.0)v_0:1.0,v_4:2.0)v_1;$}
\item Multiple sequence alignment $m$ \\ \emph{Example:} $m = $\begin{verbatim}
v_2|-a
v_3|aa
v_4|a-
\end{verbatim}
\end{enumerate}
\item Computing modified Felsenstein recursion:
\begin{enumerate}
\item For each site, compute $\tilde f_v(\sigma)$ in post-order using Equation~(\ref{eq:tfvs}), and from each $\tilde f_v(\sigma)$, compute $\tilde f_v$ using Equation~(\ref{eq:tfv}) \\ \emph{Example: \\for site 1, $(\tilde f_{v_2},\tilde f_{v_3},\tilde f_{v_0},\tilde f_{v_4},\tilde f_{v_1}) = (0.0,1.0,0.23,1.0,0.012)$; \\for site 2, $(\tilde f_{v_2},\tilde f_{v_3},\tilde f_{v_0},\tilde f_{v_4},\tilde f_{v_1}) = (1.0,1.0,0.14,0.0,0.043)$; }
\item Do the same for an artificial site or column $c_{\emptyset}$ where all leaves have a gap \\ \emph{Example: \\for site 3, $(\tilde f_{v_2},\tilde f_{v_3},\tilde f_{v_0},\tilde f_{v_4},\tilde f_{v_1}) = (0.0,0.0,0.40,0.0,0.67)$; }
\end{enumerate}
\item For each node $v$ in the tree, compute the survival probability $\beta(v)$ using Equation~(\ref{eq:betas}) (setting it to 1 at the root for convenience) \\ \emph{Example: \\ $(\beta(v_2),\beta(v_3),\beta(v_0),\beta(v_4),\beta(v_1)) = (0.63,0.63,0.63,0.43,1.0)$}
\item For each site, compute the set of nodes $A$ ancestral to all extant characters, as described in the caption of Figure~\ref{fig:pscp-fel} \\\emph{Example: \\for site 1, $A = \{v_1\}$\\for site 2, $A = \{v_0,v_1\}$}
\item Computing $f_v$:
\begin{enumerate}
\item For each site, compute $f_v$ using Equation~(\ref{eq:fv}) \\\emph{Example: \\for site 1, $(f_{v_2},f_{v_3},f_{v_0},f_{v_4},f_{v_1}) = (0.0,0.0,0.0,0.0,0.012)$; \\for site 2, $(f_{v_2},f_{v_3},f_{v_0},f_{v_4},f_{v_1}) = (0.0,0.0,0.086,0.0,0.043)$; }
\item For $c_{\emptyset}$, use Equation~(\ref{eq:fvgap}) \\ \emph{Example: \\for site 3, $(f_{v_2},f_{v_3},f_{v_0},f_{v_4},f_{v_1}) = (0.37,0.37,0.62,0.57,0.67)$; }
\end{enumerate}
\item For each node $v$ in the tree, compute $\iota_v = \P(V = v)$ as shown in Section~3 of the main paper \\ \emph{Example: \\ $(\iota(v_2),\iota(v_3),\iota(v_0),\iota(v_4),\iota(v_1)) = (0.17,0.17,0.17,0.33,0.17)$}
\item Compute $p_\tau(m)$ from the $\iota_v$'s, $f_v$'s as shown in Section~3 of the main paper \\ \emph{Example: $\log p_\tau(m) = -11$}
\end{enumerate}

\end{document}